\documentclass[11pt]{article}

\usepackage{fullpage,amsmath,xspace,amssymb,epsfig,graphicx,cite}
\usepackage[usenames,dvipsnames,dvips]{color}

\newtheorem{Thm}{Theorem}[section]
\newtheorem{Lem}[Thm]{Lemma}
\newtheorem{Cor}[Thm]{Corollary}
\newtheorem{Prop}[Thm]{Proposition}

\newtheorem{Def}[Thm]{Definition}
\newtheorem{Fact}{Fact}
\newtheorem{Conj}{Conjecture}
\newenvironment{proof}{\noindent {\textbf{Proof }}}{$\Box$ \medskip}


\newcommand\mbN{\mbox{$\mathbb{N}$}}

\newcommand\mbR{\mbox{$\mathbb{R}$}}

\newcommand\mcX{\mathcal{X}}
\newcommand\mcY{\mathcal{Y}}
\newcommand\mcQ{\mathcal{Q}}
\newcommand\mcM{\mathcal{M}}
\newcommand\B{\{0,1\}}

\newcommand {\ie} {\textit{i.e.}\xspace}
\newcommand {\st} {\textit{s.t.}\xspace}

\newcommand\pr{\mbox{\bf Pr}}
\newcommand\av{\mbox{\bf E}}
\newcommand\ppad{\mbox{\bf PPAD}\xspace}

\newcommand\cptp{\mbox{\sf {CPTP}}\xspace}

\newcommand\alice{\mbox{\sf {Alice}}\xspace}
\newcommand\bob{\mbox{\sf {Bob}}\xspace}
\newcommand\rcomm{\mbox{\sf {RComm}}\xspace}
\newcommand\qcomm{\mbox{\sf {QComm}}\xspace}
\newcommand\rcorr{\mbox{\sf {RCorr}}\xspace}
\newcommand\qcorr{\mbox{\sf {QCorr}}\xspace}
\newcommand\size{\mbox{\tt {size}}\xspace}
\newcommand\rank{\mbox{\tt {rank}}\xspace}
\newcommand {\good} {\text{good}}

\newcommand\ket[1]{| #1 \rangle}
\newcommand\bra[1]{\langle #1 |}
\newcommand\qip[2]{\langle #1 | #2 \rangle}


\bibliographystyle{alpha}
\pagenumbering{arabic}
\pagestyle{plain}

\begin{document}
\title{\bf Quantum Strategic Game Theory}
\author{Shengyu Zhang\thanks{Department of Computer Science and Engineering and The Institute of Theoretical Computer Science and Communications, The Chinese University of Hong Kong, Shatin, N.T., Hong Kong. Email: {\tt syzhang@cse.cuhk.edu.hk}.}}
\date{}
\maketitle

\begin{abstract}
We propose a simple yet rich model to extend strategic games to the quantum setting, in which we define quantum Nash and correlated equilibria and study the relations between classical and quantum equilibria. Unlike all previous work that focused on qualitative questions on specific games of very small sizes, we quantitatively address the following fundamental question for general games of growing sizes: 
\begin{quote}
	\emph{How much ``advantage" can playing quantum strategies provide, if any?} 
\end{quote}
Two measures of the advantage are studied.
\begin{enumerate}
	\item A natural measure is the increase of payoff. We consider natural mappings between classical and quantum states, and study how well those mappings preserve the equilibrium properties. Among other results, we exhibit a correlated equilibrium $p$ whose quantum superposition counterpart $\sum_s \sqrt{p(s)}\ket{s}$ is far from being a quantum correlated equilibrium; actually a player can increase her payoff from almost 0 to almost 1 in a $[0,1]$-normalized game. We achieve this by a tensor product construction on carefully designed base cases. 
	
	\item Another measure is the hardness of generating correlated equilibria, for which we propose to study \emph{correlation complexity}, a new complexity measure for correlation generation. We show that there are $n$-bit correlated equilibria which can be generated by only one EPR pair followed by local operation (without communication), but need at least $\log_2(n)$ classical shared random bits plus communication. The randomized lower bound can be improved to $n$, the best possible, assuming (even a much weaker version of) a recent conjecture in linear algebra. We believe that the correlation complexity, as a complexity-theoretical counterpart of the celebrated Bell's inequality, has independent interest in both physics and computational complexity theory and deserves more explorations. 
\end{enumerate}
\end{abstract}

\section{Introduction}\label{sec: intro}
\subsection{Game theory}
Game theory is a branch of applied mathematics to model and analyze interactions of two or more individuals, usually called \emph{players}, each with a possibly different goal. Over decades of development, game theory has grown into a rich field, and has found numerous applications in economics, political science, biology, philosophy, statistics, computer science, etc. Many models have been proposed to study games, among which the most popular and fundamental ones are \emph{strategic games} (or games in strategic or normal form) and \emph{extensive games} (or games in extensive form). In the former, the players choose their strategies simultaneously, and then each receives a payoff based on all players' strategies. In the latter the players choose their strategies adaptively in turn, and finally when all players finish their moves, each receives a payoff based on the entire history of moves of all players. Variation in settings exists. For instance, if before playing the game, each player also receives a private and random input, then they are playing a \emph{Bayesian game}, which belongs to the larger class of \emph{games with incomplete information}. See standard textbooks such as \cite{OR94,FT91} for more details. 

Motivated by the emergence of Internet and other systems with a huge number of players, various algorithmic and complexity-theoretical perspectives from computer science have been added as one more dimension for studying games. 
See an excellent recent textbook \cite{VNRT07} for more background on this emerging field of \emph{algorithmic game theory}.

Equilibrium as a central solution concept in game theory attempts to capture the situation in which each player has adopted an optimal strategy, provided that others keep their strategies unchanged. Nash equilibrium \cite{vNM44,Nas50,Nas51}\footnote{Introduced by von Neumann and Morgenstern \cite{vNM44} who showed existence of a Nash equilibrium in any zero-sum game, existence later extended by Nash, a Laureate of Nobel Prize in Economic Sciences, to any game with a finite set of strategies \cite{Nas51}.} is the first and most fundamental concept of equilibrium. A joint strategy is a pure Nash equilibrium if no player has any incentive to change her strategy. If each player draws her strategies from a probability distribution, and no player can increase her expected payoff by switching to any other strategy on average of other players' strategies, then they are playing a mixed Nash equilibrium. Note that here we require no correlation between players' probabilistic strategies. 

One important extension of Nash equilibrium is \emph{correlated equilibrium} \cite{Aum74}\footnote{Defined by Aumann, another Laureate of Nobel Prize in Economic Sciences.}, which relaxes the above independence requirement. We can think of a correlated equilibrium being generated by a Referee (or a ``Mediator"), who samples a joint strategy from the correlated distribution and sends the $i$-th part to Player $i$. Given only the $i$-th part, Player $i$ then does not have incentive to change to any other strategy. Correlated equilibrium captures many natural scenarios that Nash equilibrium fails to do, as illustrated by the following two canonical examples. 

The first example is a game called \emph{Traffic Light}, in which two cars face each other at an intersection. Both cars have choices of passing and stopping. If both cars choose to pass, then there will be an accident, so both players suffers a lot. If at most one car passes, then the passing car has payoff 1 since it does not need to wait, and the car that stops has payoff 0. The payoff is summarized by the following payoff bimatrix, where in each entry, the first number is the payoff for Player 1 and the second is for Player 2.
\begin{center}
	\begin{tabular}{|r|c|c|}
		\hline 
		  & Cross & Stop \\
		\hline 
		Cross & (-100,-100) & (1,0) \\ 
		\hline
		Stop & (0,1) & (0,0)	\\
		\hline	
	\end{tabular}
\end{center}
There are two pure Nash equilibria in this game, namely (Cross,Stop) and (Stop,Cross). But there is a fairness issue: Which car should cross, given that both cars prefer so? Or in the language of games, which equilibrium they should agree on? There is actually a third Nash equilibrium, which is a mixed one: Each car crosses with probability 1/101. This solves the fairness issue, but lose the efficiency: The expected total payoff is very small (0);  most likely both cars would stop, and even worse, there is a positive probability of car crash. If one looks at the real world, things are much simpler by introducing a traffic light. Each car gets a signal which can be viewed as a random variable uniformly distributed on \{red, green\}. The two random signals/variables are designed to be perfectly correlated that if one is red, then the other is green. This is actually a correlated equilibrium, i.e. a distribution over \{Cross,Stop\}$\times$\{Cross,Stop\} with half probability on (Cross,Stop) and half on (Stop,Cross). It is easy to verify that it simultaneously achieves high payoff, fairness, and zero-probability of accident. 

The second example is a game called \emph{Battle of the Sexes}, in which a couple want to travel to a city for vacation, and \alice prefers A to B, while \bob prefers B to A. But both would like to visit the same city together rather than going to different ones separately. The payoffs are specified by the following bimatrix. \begin{center}
	\begin{tabular}{|r|c|c|}
		\hline 
		  & A & B \\
		\hline 
		A & (2,4) & (0,0) \\ 
		\hline
		B & (0,0) & (4,2)	\\
		\hline	
	\end{tabular}
\end{center}
Again, there are two pure Nash equilibria and the two parties prefer different ones, thus resulting a ``Battle" of the Sexes. A good solution is to take the correlated equilibrium, $(A,A)$ with half probability and $(B,B)$ with half probability, generated by a mediator flipping a fair coin. 

Apart from providing a natural solution concept in game theory as illustrated above, correlated equilibria also enjoy computational amenity for finding and learning in general strategic games as well as other settings such as graphical games (\hspace{-.08em}\cite{VNRT07}, Chapter 4 and 7).

\subsection{Quantum games}
Since there is no reason to assume that people interacting with quantum information are not selfish, quantum games provide a ground for understanding, reasoning and governing quantum interactions of selfish players, and it is thus important to investigate quantum games. The existing literature under the name of ``quantum games" can be roughly divided into three tracks. 
\begin{enumerate}
	\item \emph{Nonlocal games.} This is a particular class of Bayesian games in the strategic form, such as GHZ game, CHSH game, Magic Square game, etc. These games are motivated by the non-locality of quantum mechanics as opposed to any classical theory depending on ``hidden variables". In these games, each of the two or more parties receives a private input drawn from some known distribution, and the players output some random variables, targeting a particular correlation between their outputs and inputs. The main goal of designing and studying these games is to show that some correlations are achievable by quantum entanglement but not classical randomness, thus providing more examples for Bell's theorem \cite{Bel65} that refutes Einstein's program of modeling quantum mechanics as a classical theory with hidden variables. See \cite{BCMdW10} for a more comprehensive survey (with an emphasis on connections to communication complexity). In recent years non-local games also found connections to multi-prover interactive proof systems in computational complexity theory; see, for example, \cite{CHTW04,KKM+08,IKP+08,KKMV09,KR10,KRT10}.
	
	\item \emph{Quantization of strategic games}. Unlike the first track of research motivated by physics (and computational complexity theory), the second track of work aims at quantizing classical strategic game theory. The basic setting for a classical strategic game of $k$ players is as follows. Player $i$ has a set $S_i$ of strategies and a utility function $u_i$; when the players take a joint strategy $s = (s_1, \ldots, s_k)$, namely Player $i$ takes strategy $s_i$, each Player $i$ gets a payoff of $u_i(s)$. There are various models proposed to quantize this classical model. The basic approach is to extend each Player $i$'s strategy space from $S_i$ to the Hilbert space $H_i = span(S_i)$, and to allow the player to take quantum operations on $H_i$. Eventually a measurement in the computational basis is made to get a (random) classical joint strategy $s$, which decides the payoff of the players by the classical payoff functions $u_i$. 
	
	The approach was implemented in the seminal paper \cite{EWL99} as follows; see Fig \ref{fig: EWL}. There is an extra party, called Referee, who applies a unitary operation $J$ on $\ket{0}$ (in the Hilbert space of dimension $\sum_i |S_i|$), and partitions the state $J\ket{0}$ into $k$ parts for the $k$ players. The players then perform their individual quantum operations on their own spaces, after which Referee collects these parts,  performs the inverse operation $J^{-1}$, and finally measures the state in the computational basis to get a random joint strategy $s$. Players $i$ then gets payoff $u_i(s)$.  
	
\begin{figure}%
\begin{center}
\includegraphics[width=4in]{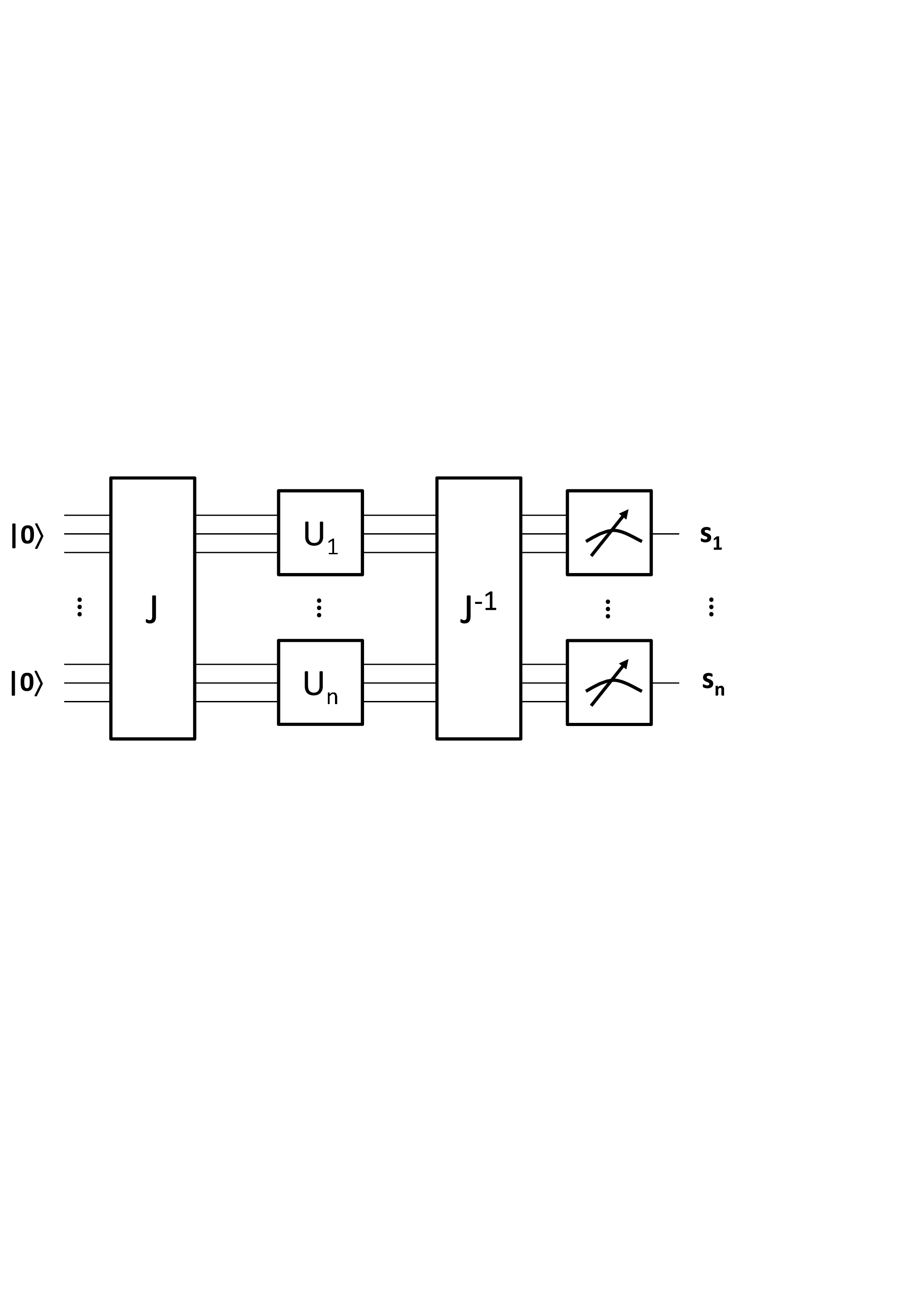}%
\end{center}
\caption{The EWL model for quantization of strategic games}%
\label{fig: EWL}%
\end{figure}

The EWL-model \cite{EWL99} unleashed a sequence of following studies under the same model \cite{BH01b,LJ03,FA03,FA05,DLX+02b,DLX+02a,PSWZ07}. Despite the rapid accumulation of literature on the same or similar model, controversy also exists. As pointed out in \cite{CT06}, there are ``ad hoc assumptions and arbitrary procedures scattered in the field". We will elaborate on this shortly.

	\item \emph{Quantum extensive games.} In a seminal work \cite{Mey99}, Meyer showed that in the classical Penny Matching game, if (1) Player 1 is allowed to use quantum strategies but Player 2 is restricted to classical strategies, and (2) the sequence of moves is (Player 1, Player 2, Player 1), then Player 1 can win the game for sure. This demonstrates the power of using quantum strategies under some particular restriction on the other player's strategies as well as the sequence of moves. Gutoski and Watrous \cite{GW07} initializes studies of the general refereed game in the extensive form. The model adopted there is very general, easily encompassing all previous work (and the model in our paper) as special cases. It has interesting applications such as a very short and elegant proof of Kitaev's lower bound for strong coin-flipping. The generality makes the framework and techniques potentially useful in a broad range of applications, though probably also admits less structures or at least makes it challenging to discover strong properties. Other examples of quantum extensive games include \cite{JW09,GW10}, which usually have a very small number of rounds. 
\end{enumerate}

\subsection{Our Results} 
Our goal is to study quantitative problems of general strategic games of size $n$ in a natural quantization model. To this end, we first give an arguably more natural model, and then study two measures of quantum advantages. 
\subsubsection{Model}
Despite of the prevalence, controversy also exists on the EWL-model. The main result in \cite{EWL99} was that a quantum strategy can ``escape" the Prisoner's dilemma, and this was obtained on the assumption that each player is only allowed to apply a specific subset of unitary operations. As pointed out in \cite{BH01}, the assumption does not seem to ``reflect any physical constraint (limited experimental resources, say) because this set is not closed under composition". Also shown in the paper \cite{BH01} is that without the assumption, namely if the players are allowed to use arbitrary local unitary operations, the proposed strategy in \cite{EWL99} is not a quantum Nash equilibrium any more. For this reason, we do not want to restrict players' possible actions in any way; we allow each player to take any quantum admissible operation (i.e. any TPCP map).

\begin{figure}%
\begin{center}
\includegraphics[width=2.5in]{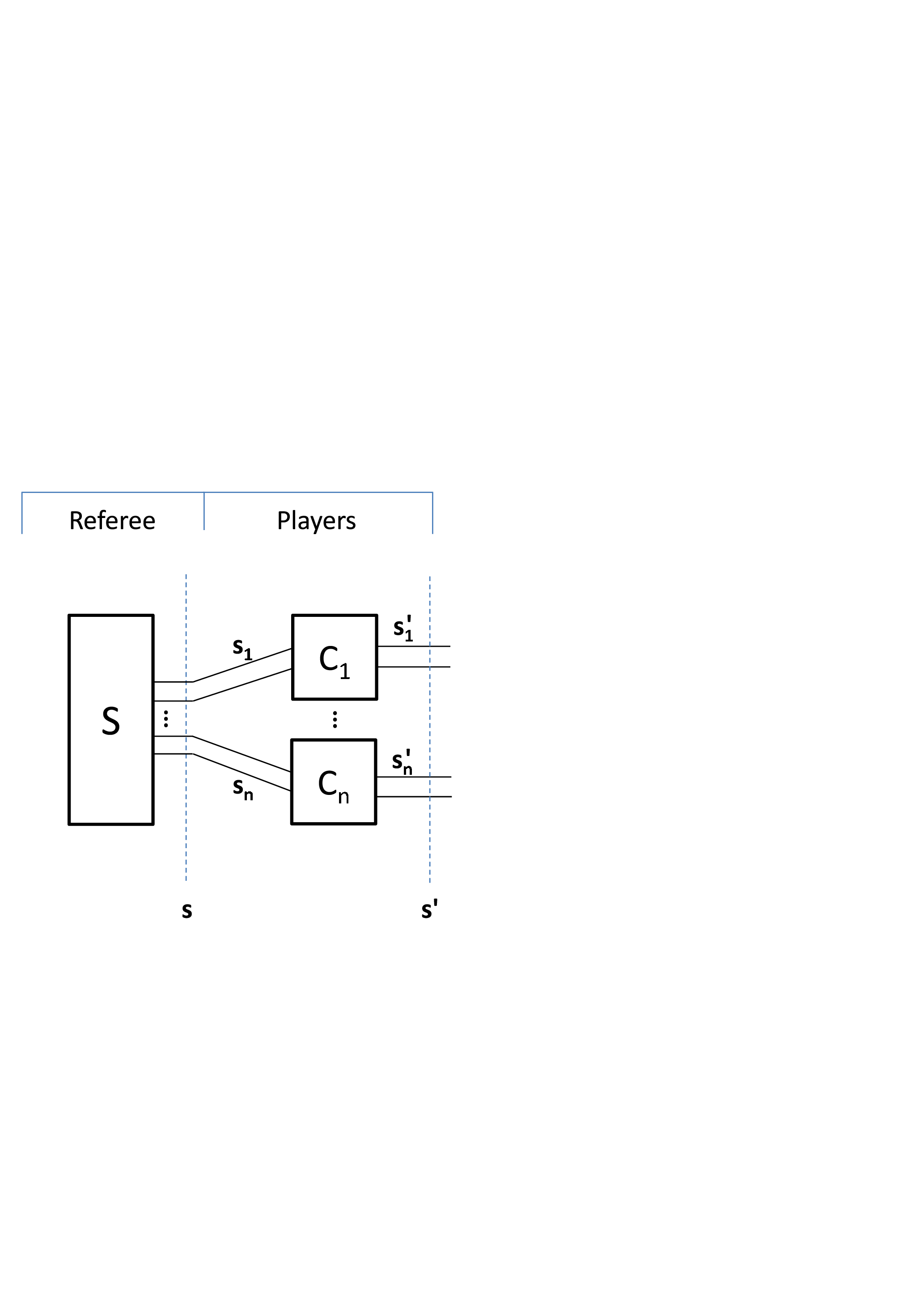}%
\end{center}
\caption{Classical strategic games: Referee samples a joint strategy $s$ and send the $i$-th part to Player $i$, who then applies a classical operation $C_i$ resulting in a possibly different strategy $s_i'$.}%
\label{fig: classical model}%
\end{figure}

A bigger difference of the EWL-model and ours, illustrated in Figure \ref{fig: our model}, is that we remove operation $J^{-1}$ in the EWL-model. We find that this corresponds to the classical model more precisely. Recall that in a classical strategic game, illustrated in Figure \ref{fig: classical model}, Referee samples a joint strategy $s = (s_1, \ldots, s_k) \in S$ from a classical distribution $p$ on $S$, and gives $s_i$ to Player $i$, who may apply a classical operator $C_i$ and output a possibly different strategy $s_i'$. The players then receive a payoff $u_i(s_1', ..., s_k')$. Note that different than in the EWL-model, Referee in the classical model does \emph{not} undo the initial sampling. 

A related question is why not going to the more general setting by letting Referee apply another joint operation $K$ before the final measurement?\footnote{The same question in another form: Why not allow a general measurement instead of the measurement in the computational basis?} Because classically Referee does not do any joint re-sampling after players' actions as well --- Referee's role is simply to sample and \emph{recommend} strategies to players. 
Another advantage of not having $K$ is that now fundamental concepts such as quantum equilibrium (that we shall define next) will be only of the classical game under quantization, rather than also of an extra introduced quantum operator $K$. Last, if one really prefers to have $K$, then which $K$ to choose? In many games such as the two canonical examples in Section \ref{sec: intro}, Nature gives the payoff and Nature does not perform any joint measurement. (Consider for example the Traffic Light game: After the two cars get the signals and decide their moves, they do not send their pass/stop decision to any Referee for any joint measurement --- They simply perform the actions and then naturally face the consequences.) 
So even if one likes to study various $K$'s, the case of $K=I$ should be probably the first natural one to consider. 

A final remark about the generality of the model: It is admitted that there are many ways to further generalize our model (such as having the general measurement $K$ discussed just now). But models should not be simply measured by generality, otherwise Nash equilibrium should not have been separately studied because it has so many (natural!) generalizations, and strategic games should not have been separately studied because they are just a special case of extensive games (two-move imperfect extensive games). Our goal was never to identify the most general model (which probably does not exist at all), but to propose a model which is natural, simple, fundamental, and hopefully rich in interesting questions --- like the notion of Nash equilibrium or the model of classical strategic games. 

\begin{figure}%
\begin{center}
\includegraphics[width=3.3in]{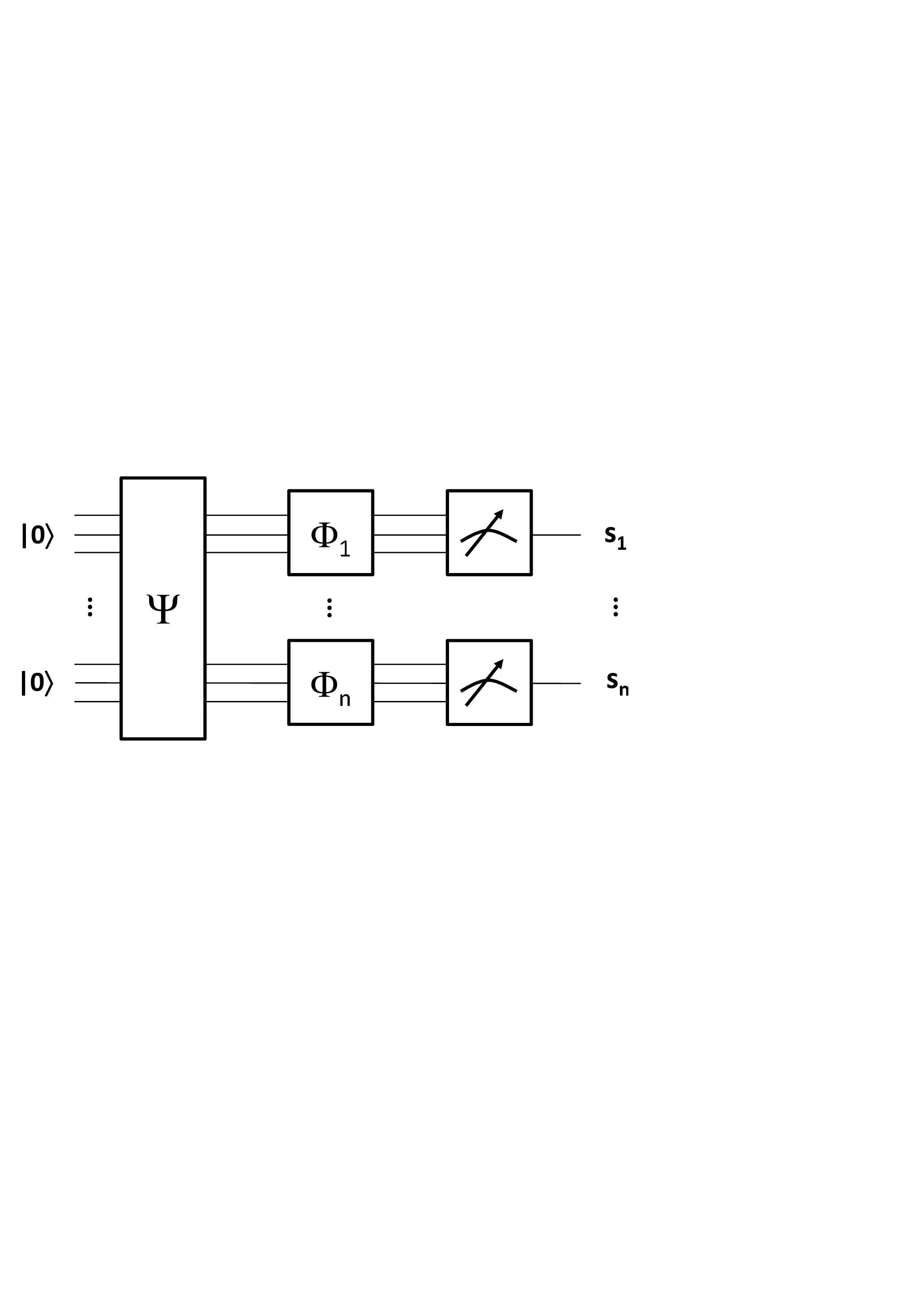}%
\end{center}
\caption{Our model for quantization of strategic games: No action of Referee after the players' moves, and the operations by Referee and the players are general quantum admissible ones.}%
\label{fig: our model}%
\end{figure}

So our model finally looks like the one in Figure \ref{fig: our model}: Referee applies a joint operator $\Psi$ on a all-zero state to create a quantum state $\rho$, and gives the $i$-th part of it to Player $i$, who applies $\Phi_i$ followed by a measurement in the computational basis. The players then receive their payoffs according to the functions $u_i$.  

Without the referee's action $J^{-1}$, our model is simpler. 
In \cite{BH01b}, three criteria were raised for an ideal quantization of classical strategic games given : (a) $S_i$ is generalized to $H_i = span(S_i)$, (b) strategies in $H$ are to be entangled, and (c) the resulting game generalizes the classical game. Note that despite being simpler than the EWL model, ours easily satisfies all of them as well. One may wonder whether ours is too simple to be of any mathematical interest. It turns out, as will be shown in the following sections, that our model has many interesting mathematical questions with connections to communication complexity, non-convex optimization and linear algebra. 

The concept of equilibria can be naturally extended to the quantum case. Recall that in a classical game, a joint strategy $s = (s_1, ... , s_k) \in S$ is sampled from a classical distribution $p$ on $S$ and Player $i$ receives an expected payoff $\av_{s\leftarrow p}[u_i(s)]$. A classical distribution $p$ is an equilibrium if no player can increase her expected payoff by any classical local operation. Now our model admits an almost word-by-word translation of the above definition to the quantum case: A joint strategy $s = (s_1, \ldots, s_k) \in S$ is measured from a quantum mixed state $\rho$ on $H$, and Player $i$ receives an expected payoff $\av_{s\leftarrow \rho}[u_i(s)]$. A quantum state $\rho$ is an equilibrium if no player can increase her expected payoff by any quantum local operation. Here the measurement is in the computational basis $S$, only on which the utility function is defined in the first place. 



\subsubsection{Question}
 	Other than the model, what also distinguishes the present work from previous ones is the generality of the \emph{classical} games under quantization. Most of the previous work focus on particular games, usually of small and fixed sizes. For example, \cite{EWL99,DLX+02b,DLX+02a,PSWZ07,CH06} considered the \emph{Prisoner's Dilemma} game, \cite{MW00} considered the \emph{Battle of the Sexes} game, and \cite{Mey99} considered the \emph{Penny Matching} game, and there are many other studies on specific $2\times 2$ or $3\times 3$ games, e.g. \cite{FA03,FA05,DKK+02,IT02,DKK+02}, just to name a few. 
 	
 	In addition, most of the previous work focused on \emph{qualitative} questions such as whether playing quantum strategies has any advantage. While it is natural to start at qualitative questions on specific and small examples, it is surely desirable to have a systematic study on quantitative properties for general games. In particular, our aim is to understand the following fundamental question on general games of size $n$.
 	
 	\begin{quote}
 		\emph{Central Question: How much advantage can playing quantum strategies provide, if any?}
 	\end{quote}
	
Depending on how the advantage is measured, we study the question in two ways, summarized as follows.
\subsubsection{Quantum advantage 1: Increase of payoff}
	 Since games are all about players trying to get maximum payoffs, the first measure (of advantage) we naturally take is the increase of payoffs. We shall consider natural mappings between classical and quantum states, and study how well those mappings preserve the equilibrium properties. Recall that a quantum state $\rho$ in space $H = \otimes_i H_i$ is a quantum correlated equilibrium if no Player $i$ can increase her expected payoff by any local operation. If further $\rho = \otimes_i \rho_i$ for some $\rho_i$ in $H_i$, then it is a quantum Nash equilibrium. 
	 
	Under this definition, we relate classical and quantum equilibria in the following ways. Given a quantum state, the most natural classical distribution it induces is given by the measurement in the computational basis $S$. That is, $\rho$ induces $p$ where $p(s) = \rho_{s,s}$. Not surprisingly, one can show that if $\rho$ is a quantum Nash (or correlated) equilibrium then $p$ is a classical Nash (or correlated) equilibrium.
	
	The other direction, namely transition from classical to quantum, is more complicated but interesting. A classical distribution $p$ over $S$ has two natural quantum counterparts: 1) \emph{classical mixture}: $\rho(p) = \sum_s p(s) \ket{s}\bra{s}$, the mixture of the classical states, and 2) \emph{quantum superposition}: $\ket{\psi(p)} = \sum_s \sqrt{p(s)} \ket{s}$. We regard the second mapping as more important because firstly, this is really quantum --- the first mapping is essentially the classical state itself --- and secondly, this mapping is the most commonly used quantum superposition of a classical distribution in known quantum algorithms, such as starting state in Grover's search \cite{Gro97} and the states to define the reflection subspaces in Szegedy's quantization of random walks \cite{Sze04}. It so happens that it is also the most intriguing case of our later theorems. 
	
	One can also consider the broad class of quantum states $\rho$ satisfying $p(s) = \rho_{s,s}$, including the above two concrete mappings as special cases. Now the question is, do these transformations keep the Nash/correlated equilibrium properties? It turns out that the classical mixture mapping keeps both Nash and correlated equilibrium properties, but the quantum superposition mapping only keeps the Nash equilibrium property. As to the general class of correspondence, no equilibrium is guaranteed to be kept. 

	Based on these answers, it is more desirable to study them quantitatively: After all, if $\ket{\psi(p)}$ is not an exact correlated equilibrium but always an $\epsilon$-approximate one, in the sense that no player can increase her payoff by more than a small amount $\epsilon$, then the interest of using quantum strategies significantly drops. Therefore, we are facing the following question. (For proper comparison, assume that all games are [0,1]-normalized, \ie all utilities take values from [0,1].)
	\begin{quote}
		\emph{Question 1: In a [0,1]-normalized game, what is the largest gain of payoff by playing a quantum strategy on a quantum counterpart state of a classical equilibrium}?
	\end{quote} 
The question turns out to be a non-convex program, which is notoriously hard to analyze in general. Actually even the simple case of $n=2$ is already quite nontrivial to solve. 
The maximum gain turns out to be a small constant close to 0.2, but neither the analysis nor the solution admits a generalization to higher dimensions in any straightforward way. For general $n$, there is no clue what the largest gain should be. Nevertheless, we could show the following, among other results. 	
	\begin{Thm}\label{thm:main1}
	\begin{enumerate}
		\item There exists a correlated equilibrium $p$ in a $[0,1]$-normalized $(n\times n)$-bimatrix game \st 
		\begin{equation}
			u_1(\ket{\psi(p)}) = \tilde O(1/\log n) \quad \text{ and } \quad u_1(\Phi_1(\ket{\psi(p)})) = 1-\tilde O(1/\log n),
		\end{equation}  
		for some local quantum operation $\Phi_1$.\footnote{$\tilde O$ hides a $poly(\log\log(n))$ factor.} There is also a correlated equilibrium $p$ with the multiplicative factor 
		\begin{equation}
			\frac{u_1(\Phi_1(\ket{\psi(p)}))}{u_1(\ket{\psi(p)})} = n^{0.585...}.
		\end{equation}

		\item There exists a Nash equilibrium $p$ in a $[0,1]$-normalized $(n\times n)$-bimatrix game, and a quantum state $\rho$ with $\rho_{ss} = p(s)$, \st 
		\begin{equation}
			u_1(\rho) = 1/n \quad \text{ and } \quad u_1(\Phi_1(\rho)) = 1,
		\end{equation} 
		for some local quantum operation $\Phi_1$. The additive increase of $1-1/n$ and the multiplicative increase of $n$ are the largest possible even for all \emph{correlated} equilibria $p$.
	\end{enumerate}
 	\end{Thm}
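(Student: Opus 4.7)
The plan is to build Part~1 by a tensor-product amplification of a carefully designed base case, and to handle Part~2 by a self-contained entangled-state example plus a direct optimality argument. For the base case I seek a small ($O(1)$-strategy) bimatrix game and a classical correlated equilibrium $p_0$ such that (i) the CE inequalities for Player~1 are all satisfied under utility $u_1$, (ii) the diagonal payoff $v_0 = \sum_s p_0(s) u_1(s)$ is small (or small by a controlled factor), and (iii) some local unitary $U_1$ on Player~1's register, applied to $\ket{\psi(p_0)}=\sum_s\sqrt{p_0(s)}\ket{s}$, produces expected payoff $v_0+\delta_0$ with a positive additive or prescribed multiplicative gain $\delta_0$. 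The main obstacle is satisfying (i)--(iii) simultaneously: the CE inequalities restrict the choice of $u_1$ to a linear polytope, while a large $\delta_0$ requires heavy constructive interference at payoff cells, which in turn forces $p_0$ to have substantial off-diagonal amplitude structure $\sqrt{p_0(s)p_0(s')}$ between cells sharing Player~2's strategy. I would jointly tune $u_1$ and $p_0$ against these opposing pulls, picking the precise payoff targets to match the final $n$ (with $v_0 \to 1$ and $1-v_0, \delta_0$ both decreasing in $k$ for the additive result; base multiplicative ratio exactly $3/2$ for the multiplicative result).

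Next I amplify by tensor product. Given the base game, I take $k$ independent copies with product utility $u^{(k)}_1(s^{(1)},\ldots,s^{(k)}) = \prod_t u_1(s^{(t)})$, which stays in $[0,1]$, and joint distribution $p = p_0^{\otimes k}$. The product structure makes $p$ remain a classical CE (any switching function factorizes coordinatewise, reducing to the base-case CE inequality), and $\ket{\psi(p)} = \ket{\psi(p_0)}^{\otimes k}$. Applying $U_1^{\otimes k}$ on Player~1's $k$ registers multiplies payoffs, yielding $u_1(\ket{\psi(p)}) = v_0^k$ and $u_1(U_1^{\otimes k}\ket{\psi(p)}) = (v_0+\delta_0)^k$. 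Taking $n = |S_0|^k$, so $k = \Theta(\log n)$, and pushing $v_0, v_0+\delta_0 \to 1$ so that $v_0^k = \tilde O(1/\log n)$ and $(v_0+\delta_0)^k = 1 - \tilde O(1/\log n)$ gives the additive separation; a base game with $|S_0|=2$ and multiplicative ratio exactly $3/2$ yields $(3/2)^k = n^{\log_2(3/2)} = n^{0.585\ldots}$.

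For Part~2, the existence example is the $(n\times n)$ identity-payoff game $u_1(i,j)=\delta_{ij}$, whose uniform product distribution $p$ is a Nash equilibrium with $u_1(p) = 1/n$. Take $\rho = \ket{\phi}\bra{\phi}$ with $\ket{\phi} = \frac{1}{n}\sum_{i,j} \omega^{ij}\ket{i,j}$ for $\omega = e^{2\pi i/n}$; equivalently $\ket{\phi} = (I \otimes F)\bigl(\frac{1}{\sqrt n}\sum_i \ket{i,i}\bigr)$, a maximally entangled state after a QFT on Player~2's side, so $|\langle s|\phi\rangle|^2 = 1/n^2 = p(s)$. Player~1 applies the inverse QFT $F^{-1}$ on her register; $\ket{\phi}$ becomes $\frac{1}{\sqrt n}\sum_i\ket{i,i}$, and computational-basis measurement yields payoff $1$. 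For optimality, any $\rho$ with $\rho_{ss}=p(s)$ satisfies $u_1(\rho) = u_1(p)$, and the CE inequality for the constant switching $f\equiv i$ gives $u_1(p) \geq \max_i \av_{s_{-1}\sim p_{-1}}[u_1(i,s_{-1})]$; on the other hand, $\Phi_1$ preserves the marginal on the other players, so $u_1(\Phi_1(\rho)) \leq \av_{s_{-1}\sim p_{-1}}[\max_i u_1(i,s_{-1})]$. The elementary inequalities $\av[\max_i u_1] - \max_i \av[u_1] \leq 1-1/n$ and $\av[\max_i u_1] \leq n \max_i\av[u_1]$ (using $u_1\in[0,1]$ and $\max_i x_i \leq \sum_i x_i \leq n\max_i x_i$) close the matching additive and multiplicative bounds.
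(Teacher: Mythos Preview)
Your tensor-product amplification for Part~1 and your Fourier-on-maximally-entangled construction for Part~2 are essentially the paper's approach. Your optimality argument in Part~2 is a clean variant: the paper constructs explicit auxiliary product distributions (averaging the conditionals $\lambda(\cdot|s_{-i})$, or the uniform one on $S_i$) and compares, while you bound $u_1(\Phi_1(\rho))\le \av_{s_{-1}\sim p_{-1}}[\max_i u_1(i,s_{-1})]$ and $u_1(p)\ge \max_i \av_{s_{-1}\sim p_{-1}}[u_1(i,s_{-1})]$ and then use $\max_i x_i\le \sum_i x_i\le n\max_i x_i$. Both routes yield the same $1-1/n$ and $n$ bounds.

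There is a genuine gap in the multiplicative part of Part~1. You assert a $2\times 2$ base game (``$|S_0|=2$'') achieving multiplicative incentive exactly $3/2$, so that $(3/2)^{\log_2 n}=n^{\log_2(3/2)}=n^{0.585\ldots}$. But the paper solves the $2\times 2$ case completely and proves that the \emph{maximum} quantum multiplicative incentive over all $2\times 2$ correlated equilibria is $4/3$, attained at $P=\begin{bmatrix}2/5&2/5\\0&1/5\end{bmatrix}$. With a $2\times 2$ base the tensor construction yields only $(4/3)^{\log_2 n}=n^{\log_2(4/3)}\approx n^{0.415}$, not $n^{0.585}$. To reach the claimed exponent the paper moves to a $c\times c$ base game with $A=I_c$, $B=J_c$, and $p_{ij}=0$ if $i-j\equiv 1\pmod c$ and $p_{ij}=1/(c^2-c)$ otherwise; a carefully chosen rank-one POVM gives multiplicative ratio $4(c-1)^2/c^2$. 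Optimizing over $c$ gives $c=4$ with ratio $9/4$, and $(9/4)^{\log_4 n}=n^{\log_4(9/4)}=n^{\log_2 3-1}=n^{0.585\ldots}$. So the exponent you wrote down is correct, but it cannot come from a two-strategy base game; you need the larger base construction (or something equivalent) to actually achieve it.
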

Note that optimality is proved in the second part, and the upper bounds of the maximum gain apply to $\ket{\psi(p)}$ in the first part as a special case. Closing the gaps between the lower bounds in the first part and the general upper bounds in the second part is left open. 
 	
The main approach for Part 1 is to construct large games from smaller ones. What we need for the construction is to preserve the equilibrium \emph{and} to increase the ``quantum gain", the gain by playing quantum strategies. It turns out that the tensor product preserves the equilibrium property, and can increase the gain for small games with some parameters. The design of the base games is also not straightforward: Taking the optimal solution to the $n=2$ case does not work because taking power on that game actually decreases the gain. In the final solution, the base game itself has a very small quantum gain, but when taken power, the classical-strategy utility drops much faster than the quantum-strategy utility, creating a gap almost as large as 1. 

\subsubsection{Quantum advantage 2: Correlation generation}
	We also study the quantum advantage from a complexity-theoretical perspective. As we have mentioned, correlated equilibria possess game theoretical usefulness and enjoy better computational tractability. But to really use such a good equilibrium, someone has to \emph{generate} it, which makes the hardness of its generation an interesting question. For this, we propose a new complexity measure, called \emph{correlation complexity}, defined as follows. 
	
	Take two-party case, for simplicity, where \alice and \bob aim to generate a correlation. Since local operation cannot create correlation, they start from some ``seed", which can be either a shared classical randomness or a quantum entangled state. Then they perform local operations and finally output the target correlation. 
	We are concerned with the following question.
	
	\begin{quote}
		\emph{Question 2: To generate the same correlation, does quantum entanglement as a seed have any advantage compared to the classical shared randomness? If yes, how much?}
	\end{quote}
	
	Note that this question is, in spirit, not new. Actually the entire class of non-local games study questions of the same flavor. However, a crucial part in non-local games is that the two parties are given \emph{private} (and random) inputs, which are necessary for differentiating the power of classical hidden variable and that of quantum entanglement in previous non-local game results. 

	Without the private inputs, our model is simpler and thus more basic. An immediate question is whether such a bare model still admits any separation of classical and quantum powers in generating correlations. This paper gives a strongly affirmative answer.

\begin{Thm}\label{thm:corr separation}
	For any $n>2$, there are correlations $(X, Y)$ which take at least $n$ classical bits to generate classically, but only need \emph{one} EPR pair to generate quantum mechanically.
\end{Thm}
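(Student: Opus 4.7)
The plan is to exhibit an explicit correlation $(X,Y)$ generable from a single EPR pair, and then lower-bound its classical generation cost by a non-negative rank argument. For the quantum construction, let Alice and Bob share $\ket{\Phi}=\tfrac{1}{\sqrt{2}}(\ket{00}+\ket{11})$ and each apply a rank-one POVM with $N=2^n$ outcomes. Pick unit vectors $\{\ket{\psi_x}\}_{x\in\{0,1\}^n}$ and $\{\ket{\phi_y}\}_{y\in\{0,1\}^n}$ in $\mathbb{C}^2$ together with positive scalars $\alpha_x,\beta_y$ such that $\sum_x \alpha_x\ket{\psi_x}\bra{\psi_x}=\sum_y \beta_y\ket{\phi_y}\bra{\phi_y}=I$. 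Alice applies $\{\alpha_x\ket{\psi_x}\bra{\psi_x}\}_x$ while Bob applies $\{\beta_y\ket{\phi_y}\bra{\phi_y}\}_y$; using $(A\otimes I)\ket{\Phi}=(I\otimes A^T)\ket{\Phi}$, the outcome distribution is
$$p(x,y)\;=\;\tfrac{1}{2}\,\alpha_x\beta_y\,\bigl|\langle\psi_x\mid\phi_y^*\rangle\bigr|^2.$$
The vectors are to be chosen carefully (for instance, spread on a single great circle of the Bloch sphere, or configured so that the probability matrix has a prescribed support pattern) in order to maximize the classical cost.

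For the classical lower bound, suppose Alice and Bob run any protocol with $r$ bits of shared randomness $R$, arbitrary private randomness, and a transcript $\tau$ of total length $c$. By the standard rectangle property of communication protocols, conditioning on the pair $(R,\tau)$ makes Alice's and Bob's outputs independent, so
$$p(x,y)\;=\;\sum_{R,\tau}\Pr[R,\tau]\,p(x\mid R,\tau)\,p(y\mid R,\tau),$$
which exhibits the $N\times N$ matrix $P=[p(x,y)]$ as a sum of at most $2^{r+c}$ nonnegative rank-one matrices. Hence $\mathrm{rank}_+(P)\le 2^{r+c}$, and it suffices to prove $\mathrm{rank}_+(P)\ge 2^n=N$. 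Since $P$ has only $N$ rows, this is the largest conceivable lower bound.

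The main obstacle is precisely this extremal bound on the non-negative rank. Because both POVMs act on single qubits, the ordinary rank of $P$ is at most a small constant and its PSD rank is at most $2$, so we are asserting the maximum possible gap between rank and $\mathrm{rank}_+$. Establishing $\mathrm{rank}_+(P)=N$ for the specific family of matrices produced by the construction is where the linear-algebra conjecture flagged in the introduction enters; unconditionally, the same framework only gives a weaker lower bound on $\mathrm{rank}_+(P)$, which corresponds to the $\log_2 n$-bit lower bound mentioned in the abstract, while the full $n$-bit separation claimed by the theorem follows upon invoking the conjecture. A short final step then records that the constructed $(X,Y)$ can be realized as the correlation of a correlated equilibrium in an appropriate bimatrix game, connecting the separation back to the game-theoretic framework of the paper.
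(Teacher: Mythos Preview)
Your overall framework matches the paper's: the quantum side is handled by local operations on a single two-qubit entangled pair, and the classical side is lower-bounded via the nonnegative rank of the probability matrix $P$ (the paper proves $\rcorr(p)=\rcomm(p)=\lceil\log_2\rank_+(P)\rceil$, which is your rectangle argument sharpened to an equality). Where you go wrong is in reading the theorem as fixing the correlation size to $N=2^n$ outcomes and hence needing $\rank_+(P)=N$, which you correctly flag as the open conjecture. But the theorem as stated puts no constraint on the size of $(X,Y)$; it only asks, for each $n$, for \emph{some} correlation with $\qcorr=1$ and $\rcorr\ge n$. So the unconditional weaker bound already suffices, provided you let the matrix size grow.

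Concretely, the paper does not leave the construction vague: it takes the \emph{Euclidean distance matrix} $Q(x,y)=c_x-c_y$ for distinct reals $c_1,\dots,c_N$ and sets $P=Q\circ Q$ (normalized). Since $\rank(Q)=2$, the upper-bound half of the paper's characterization $\qcorr(p)\le\min_{Q:\,Q\circ\bar Q=P}\log_2\rank(Q)$ yields $\qcorr(P)=1$, realized by unitaries (from the SVD of $Q$) on a single entangled qubit pair. The lower bound is then the Beasley--Laffey theorem, an \emph{unconditional} result that $\rank_+(P)\ge\log_2 N$ for such matrices, giving $\rcorr(P)\ge\lceil\log_2\log_2 N\rceil$. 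Taking $N$ large enough (doubly exponential in $n$) yields $\rcorr(P)\ge n$ while $\qcorr(P)$ remains $1$, proving the theorem without any conjecture. Your ``great-circle'' suggestion is in the right spirit but does not produce this specific family, and it is precisely this family that makes the unconditional nonnegative-rank bound available; the conjecture you invoke is only needed for the stronger ``$1$ versus $n$ for a size-$n$ correlation'' separation mentioned in the abstract, not for Theorem~\ref{thm:corr separation} itself.
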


In proving the classical lower bound, we identify the \emph{nonnegative rank} as the correct measure to fully characterize the randomized correlation complexity. The nonnegative rank is a well-studied measure in linear algebra and it has many applications to statistics, combinatorial optimization \cite{Yan88}, nondeterministic communication complexity \cite{Lov90}, algebraic complexity theory \cite{Nis91}, and many other fields \cite{CP05}. 

The hidden asymptotic lower bound for randomized correlation complexity of a size-$n$ correlation is actually $\Omega(\log n)$. The bound can be improved to $n$, the largest possible, assuming a recent conjecture in linear algebra \cite{BL09}. We actually have a bold conjecture that, with probability 1, a random correlation that can be generated by one EPR pair has the randomized correlation complexity of $n$. Note that $n$ always suffices since for any fixed correlation $(X,Y)$, the two parties can simply share this very same correlation as the seed and output it. So ``1 \emph{vs.} $n$" is the largest possibly separation; this is in contrast to Bell's inequality that even infinite amount of classical shared randomness cannot simulate one EPR pair. In this sense, the correlation complexity can be viewed as a sublinear complexity-theoretical counterpart of previous non-local games. 

Coming back to the setting of games, two scenarios can happen depending on whether the local operations are trusted or not. In the first scenario, consider the a generalized \emph{Battle of the Sexes} game, where \alice and \bob are not in the same city but want to generate some correlation $p=(X,Y)$. There is a publicly trusted company C, which can help to generate $p$. Company C has a central server which generates a seed and send to its local servers A and B, distributed close to \alice and \bob, respectively. The local servers A and B apply the local operations to generate a state which is then sent to \alice and \bob. Here the local operations are carried out by the trusted servers A and B. And the complexity that we care is the size of the seed, which is also the communication between the central server to the two distributed servers A and B. The separation of classical and quantum correlation complexities directly applies to this scenario. 

In the second scenario, the mediator sends the seed directly to \alice and \bob, who are then supposed to apply the local operations $\Phi_1$ and $\Phi_2$ to generate the CE $(X,Y)$. But since now the local operations are under the control of the players, they can apply some other local operations $\Phi_1'$ and $\Phi_2'$. So the process is an equilibrium if no player has an incentive to apply any other local operation. The above separation can still be adapted to separate the minimum sizes of classical and quantum seeds in some games, but in general this scenario is more complicated and less understood, leaving a good direction for future exploration.


\subsection{More related work} 
	The last decade has witnessed the advance of our understandings of the hardness to find a Nash equilibrium in strategic games \cite{DGP09,CDT09}. There has also been some studies for communication complexity of finding a Nash equilibrium \cite{CS04, HM10}, when each player only knows her own utility function. 

	The problem of correlation generation in the asymptotic setting is considered in \cite{Wyn75} for the classical case and \cite{Win05} for the quantum case. The paper \cite{HJMR09} also studies the communication complexity for generating a correlation $(X,Y)$. But the model there takes an average-case measure: Suppose \alice samples $x\leftarrow X$ and tries to let \bob sample from $Y|(X=x)$, then what is the \emph{expected} communication needed (where the expectation is over the randomness of protocol as well as the initial sample $x\leftarrow X$)? For comparison, ours is a worst-case measure requiring that for each possible $x$, Bob samples from $Y|(X=x)$. And also note the essential difference that protocols in \cite{HJMR09} uses a large amount of public coins, which is exactly the resource we hope to save. See the last section for more discussions on this.
	
	After an earlier version of the present paper was finished and circulated, Yaoyun Shi firstly pointed out the paper \cite{ASTS+03}, which studies communication complexity of correlation generation. The correlations studied there, however, are a particular type, arising from communication complexity of Boolean functions, while ours considers general correlations. The second difference is that \cite{ASTS+03} only considers the \emph{communication} complexity, but ours also considers correlation complexity, the minimum shared resource (public randomness or entanglement) for generating the correlation \emph{without} any communication. It turns out that in the trusted local operation setting, correlation complexity is the same as communication complexity, both classically and quantumly. In the randomized case, we characterize them by nonnegative rank. The measures in the untrusted local operation setting are of a totally different story: While correlation complexity is still sublinear, there may not even be any equilibrium communication protocol to generate the correlation. Last, the main body in \cite{ASTS+03} studies a bounded-error generation, and showed an exponential separation ($O(\log n)$ versus $\Omega(\sqrt{n})$), while ours aims to generate the exact target correlation, and showed an ``infinite" separation (1 versus $\log_2 n$ unconditionally, and 1 versus $n$ assuming a conjecture).  

	Studies of computational issues of probabilistic distributions instead of Boolean functions has recently be advocated by Viola \cite{Vio10, LV10}. It is our hope that studies of the correlation complexity of distributions later help to sharpen our understandings of various complexity questions for Boolean functions.

\subsection*{Organization}
The rest of paper is organized as follows. In Section \ref{sec:pre}, after reviewing model for classical strategic games and the definitions of Nash and correlated equilibria, we introduce the quantum model and define quantum Nash and correlated equilibria. Other notation is also set up in the section. In Section 3, we show how natural maps between classical and quantum states preserves equilibrium properties, giving the proof of Theorem \ref{thm:main1}. Section 4 is devoted to the correlation complexity, where we show proof of Theorem \ref{thm:corr separation}. In the last section, we point out quite a number of problems and directions for future research.

\section{Preliminaries, quantum model, and notation}\label{sec:pre}
Suppose $X$ and $Y$ are two (possibly correlated) random variables on sample spaces $\mcX$ and $\mcY$, respectively. The \emph{size} of bivariate distribution $p = (X,Y)$, denoted by $\size(p)$, is defined as $(\lceil \log_2(|\mcX|)\rceil + \lceil \log_2(|\mcY|)\rceil ) / 2$. Here we take the factor of half because we shall talk about a correlation as a \emph{shared} resource. It is consistent with the convention that when $Y = X = R$, we say that they share a random variable $R$ of size $\lceil \log_2(|\mcX|)\rceil$. For a two-party quantum state $\rho$ in $H^1\otimes H^2$ for Hilbert spaces $H^i$ of dimension $D_i$, we also say that the size of the $\rho$, as a shared quantum state, is $(\lceil \log_2(D_1)\rceil + \lceil \log_2(D_2)\rceil ) / 2$.

Sometimes we view a bivariate distribution $p$ as a matrix, denoted by the capital $P$ for emphasis, where the row space is identified with $\mcX$ and the column space with $\mcY$. 

A matrix $A$ is called \emph{nonnegative} if each entry is a nonnegative real number. For a nonnegative matrix $A$, its \emph{nonnegative rank}, denoted by $\rank_+(A)$, is the minimum number $r$ such that $A$ can be decomposed as the summation of $r$ nonnegative matrices of rank 1. 

Suppose that in a classical game there are $k$ players, labeled by $\{1,2,\ldots,k\}$. Each player $i$ has a set $S_i$ of strategies. 
We use $s=(s_1,\ldots, s_k)$ to denote the \emph{joint strategy} selected by the players and $S= S_1 \times \ldots \times S_k$ to denote the set of all possible joint strategies. Each player $i$ has a utility function $u_i: S \rightarrow \mbR$, specifying the \emph{payoff} or \emph{utility} $u_i(s)$ to player $i$ on the joint strategy $s$. For simplicity of notation, we use subscript $-i$ to denote the set $[k]-\{i\}$, so $s_{-i}$ is $(s_1, \ldots, s_{i-1}, s_{i+1}, \ldots, s_k)$, and similarly for $S_{-i}$, $p_{-i}$, etc. 

A game is \emph{$[0,1]$-normalized}, or simply \emph{normalized}, if all utility functions have the ranges in $[0,1]$.

\subsection{Classical equilibria}
Nash equilibrium is a fundamental solution concept in game theory. Roughly, it says that in a joint strategy, no player can gain more by changing her strategy, provided that all other players keep their current strategies unchanged. The precise definition is as follows.
\begin{Def}
A \emph{pure Nash equilibrium} is a joint strategy $s = (s_1, \ldots ,s_k) \in S$ satisfying that
\begin{align*}
	u_i(s_i,s_{-i}) \geq  u_i(s_i',s_{-i}), \qquad \forall i\in [k], \forall s'_i\in S_i.
\end{align*}
\end{Def}

Pure Nash equilibria can be generalized by allowing each player to independently select her strategy according to some probability distribution, leading to the following concept of \emph{mixed Nash equilibrium}. 

\begin{Def}
A \emph{(mixed) Nash equilibrium (NE)} is a product probability distribution $p = p_1 \times \ldots \times p_k$, where each $p_i$ is a probability distributions over $S_i$, satisfying that
\begin{align*}
	\sum_{s_{-i}} p_{-i}(s_{-i}) u_i(s_i,s_{-i}) \geq  \sum_{s_{-i}} p_{-i}(s_{-i}) u_i(s_i',s_{-i}), \quad \forall i\in [k], \ \forall s_i, s'_i\in S_i \text{ with } p_i(s_i)>0.
\end{align*}
\end{Def}

Informally speaking, for a mixed Nash equilibrium, the expected payoff over probability distribution of $s_{-i}$ is maximized, i.e. $\av_{s_{-i}}[u_i(s_i,s_{-i})] \ge \av_{s_{-i}}[u_i(s_i',s_{-i})]$. A fundamental fact is the following existence theorem proved by Nash.

\begin{Thm}[Nash, \cite{Nas51}]\label{thm: existence of NE}
	Every game with a finite number of players and a finite set of strategies for each player has at least one mixed Nash equilibrium. 
\end{Thm}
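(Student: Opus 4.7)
The plan is to reduce the existence of a mixed Nash equilibrium to a fixed-point theorem on the product of simplices $\Delta = \prod_i \Delta(S_i)$, which is a nonempty compact convex subset of Euclidean space. There are two standard routes: a best-response correspondence argument via Kakutani, or an explicit self-map argument via Brouwer. I would take the Brouwer route since it avoids set-valued analysis and yields an essentially computational characterization of equilibria as fixed points.

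Concretely, for a mixed profile $p = (p_1, \ldots, p_k) \in \Delta$, write $u_i(p) = \av_{s \leftarrow p}[u_i(s)]$ for the expected payoff and define the ``gain'' functions
\[
    \phi_{i, s_i}(p) \defeq \max\bigl(0,\ u_i(s_i, p_{-i}) - u_i(p)\bigr), \qquad i \in [k],\ s_i \in S_i,
\]
measuring how much player $i$ would benefit by switching to the pure strategy $s_i$. These are continuous in $p$ since the $u_i$ are multilinear and thus continuous on the compact set $\Delta$. I would then define the Nash map $f : \Delta \to \Delta$ by
\[
    f_i(p)(s_i) \defeq \frac{p_i(s_i) + \phi_{i, s_i}(p)}{1 + \sum_{s_i' \in S_i} \phi_{i, s_i'}(p)},
\]
and verify the easy facts that $f_i(p)$ is a probability distribution on $S_i$ for every $p$, and that $f$ is continuous. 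Brouwer's fixed-point theorem then produces some $p^* \in \Delta$ with $f(p^*) = p^*$.

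The main (and only nontrivial) step is to show that any fixed point $p^*$ of $f$ is in fact a Nash equilibrium, i.e.\ $\phi_{i, s_i}(p^*) = 0$ for every $i$ and every $s_i$ in the support of $p_i^*$. I would argue by contradiction: if some $\phi_{i, t_i}(p^*) > 0$, then letting $Z_i = 1 + \sum_{s_i'} \phi_{i, s_i'}(p^*) > 1$, the fixed-point equation rewrites as $p_i^*(s_i)(Z_i - 1) = \phi_{i, s_i}(p^*)$ for all $s_i$, so $\phi_{i, s_i}(p^*) > 0$ on the entire support of $p_i^*$. But player $i$'s strategies in the support all yield payoff at most $u_i(p^*)$ on average, so at least one of them must give payoff $\le u_i(p^*)$, forcing the corresponding $\phi_{i, s_i}(p^*) = 0$, a contradiction. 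Hence $\phi_{i, s_i}(p^*) = 0$ on the support, which is exactly the defining inequality $u_i(s_i, p^*_{-i}) \ge u_i(s_i', p^*_{-i})$ required for a mixed Nash equilibrium.

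The hardest conceptual point is the last paragraph's averaging argument, which hinges on the observation that among the pure strategies player $i$ actually randomizes over, at least one cannot strictly beat the average; everything else (continuity of $f$, well-definedness, applicability of Brouwer on the compact convex product of simplices) is routine and is the reason this statement is provable at all in the finite setting.
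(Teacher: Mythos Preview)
Your proof is correct and is essentially Nash's original 1951 argument via Brouwer's fixed-point theorem. However, the paper does not give its own proof of this statement: it is simply quoted as a classical theorem with a citation to \cite{Nas51}, and the paper uses it as a black box (e.g.\ to derive Corollary~\ref{cor: QNE-exist} on the existence of quantum Nash equilibria). So there is no in-paper proof to compare against; what you wrote is the standard textbook proof, and it stands on its own.
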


There are various further extensions of mixed Nash equilibria. Aumann \cite{Aum74} introduced a relaxation called \emph{correlated equilibrium}. This notion assumes an external party, called Referee, to draw a joint strategy $s = (s_1, ..., s_k)$ from some probability distribution $p$ over $S$, possibly correlated in an arbitrary way, and to suggest $s_i$ to Player $i$. Note that Player $i$ only sees $s_i$, thus the rest strategy $s_{-i}$ is a random variable over $S_{-i}$ distributed according to the conditional distribution $p|_{s_i}$, the distribution $p$ conditioned on the $i$-th part being $s_i$. Now $p$ is a correlated equilibrium if any Player $i$, upon receiving a suggested strategy $s_i$, has no incentive to change her strategy to a different $s_i' \in S_i$, assuming that all other players stick to their received suggestion $s_{-i}$. 

\begin{Def} \label{thm:CE}
A \emph{correlated equilibrium (CE)} is a probability distribution $p$ over $S$ satisfying that
\begin{align*}
	\sum_{s_{-i}} p(s_i,s_{-i}) u_i(s_i,s_{-i}) \geq  \sum_{s_{-i}} p(s_i,s_{-i}) u_i(s_i',s_{-i}), \qquad \forall i\in [k], \ \forall s_i, s'_i\in S_i.
\end{align*}
\end{Def}
Notice that a classical correlated equilibrium $p$ is a classical Nash equilibrium if $p$ is a product distribution. 

Correlated equilibria captures natural games such as the Traffic Light and the Battle of the Sexes mentioned in Section 1. The set of CE also has good mathematical properties such as being convex (with Nash equilibria being some of the vertices of the polytope). Algorithmically, it is computationally benign for finding the best CE, measured by any linear function of payoffs, simply by solving a linear program (of polynomial size for games of constant players). A natural learning dynamics also leads to an approximate CE (\hspace{-.08em}\cite{VNRT07}, Chapter 4) which we will define next, and all CE in a graphical game with $n$ players and with $\log(n)$ degree can be found in polynomial time (\hspace{-.08em}\cite{VNRT07}, Chapter 7).

Another relaxation of equilibria changes the requirement of absolutely no gain (by deviating the strategy) to gaining a little, as the following approximate equilibrium defines.



\begin{Def}\label{def: approx CE}
An \emph{$\epsilon$-additively approximate correlated equilibrium} is a probability distribution $p$ over $S$ satisfying that
\begin{align*}
	\av_{s\leftarrow p}[u_i(s_i'(s_i)s_{-i})] \leq \av_{s\leftarrow p}[u_i(s)] + \epsilon, 
\end{align*}
for any $i$ and any function $s_i': S_i \rightarrow S_i$. For such distributions $p$, we say that the maximum \emph{additive incentive} (to deviate) is the minimum $\epsilon$ with the above inequality satisfied. Furthermore, the distribution $p$ is called an \emph{$\epsilon$-additively approximate Nash equilibrium} if it is a product distribution $p_1 \times \ldots \times p_k$. 

An \emph{$m$-multiplicatively approximate correlated equilibrium} is a probability distribution $p$ over $S$ satisfying that
\begin{align*}
	\av_{s\leftarrow p}[u_i(s_i'(s_i)s_{-i})] \leq m \cdot \av_{s\leftarrow p}[u_i(s)], 
\end{align*}
for any $i$ and any function $s_i': S_i \rightarrow S_i$. For such distributions $p$, we say that the maximum \emph{multiplicative incentive} (to deviate) is the minimum $m$ with the above inequality satisfied. Furthermore, the distribution $p$ is called an \emph{$\epsilon$-multiplicatively approximate Nash equilibrium} if it is a product distribution $p_1 \times \ldots \times p_k$. 
\end{Def}
Note that one can also define a stronger notion of approximation 
by requiring that the gain is at most $\epsilon$ for each possible $s_i$ in the support of $p$. Definition \ref{def: approx CE} only requires the gain be small on average (over $s_i$), but it is usually preferred because of its nice properties, such as the aforementioned result of being the limit of a natural dynamics of minimum regrets (\hspace{-.08em}\cite{VNRT07}, Chapter 4). 

\subsection{Quantum equilibria}
In this paper we consider quantum games which allows the players to use strategies quantum mechanically. We assume the basic background of quantum computing; see \cite{NC00} and \cite{Wat08} for comprehensive introductions. The set of admissible super operators, or equivalently the set of completely positive and trace preserving (CPTP) maps, of density matrices in Hilbert spaces $H_A$ to $H_B$, is denoted by $\cptp(H_A, H_B)$. We write $\cptp(H)$ for $\cptp(H,H)$.

For a classical strategic game as we have discussed so far, when being played quantumly, each player $i$ has a Hilbert space $H_i = span\{s_i: s_i\in S_i\}$, and a joint strategy can be any quantum state $\rho$ in $H = \otimes_i H_i$. Since we want to quantize classically defined games rather than creating new rules, we respect the utility functions of the original games. Thus we only talk about utility when we get a classical joint strategy. The most, if not only, natural way for this is to directly measure in the computational basis, which corresponds to the classical strategies. Therefore the (expected) payoff for player $i$ on joint strategy $\rho$ is 
\begin{equation}
	u_i(\rho) = \sum_s \bra{s} \rho \ket{s} u_i(s).
\end{equation}
In summary, the players measure the state $\rho$ in the computational basis $S$, resulting in a distribution of the joint strategies, and the utility is just the expected utility of this random joint strategy. 

Corresponding to changing strategies in a classical game, now each player $i$ can apply an arbitrary CPTP operation on $H_i$. So the natural requirement for a state being a quantum Nash equilibrium is that each player cannot gain by applying any admissible operation on her strategy space. The concepts of quantum Nash equilibrium, and quantum correlated equilibrium, and quantum approximate equilibrium are defined in the following, where we overload the notation by writing $\Phi_i$ for $\Phi_i \otimes I_{-i}$ if no confusion is caused. 

\begin{Def}
A \emph{quantum Nash equilibrium (QNE)} is a quantum strategy $\rho = \rho_1 \otimes \cdots \otimes \rho_k$ for some mixed states $\rho_i$'s on $H_i$'s satisfying that 
\begin{align*}
	u_i(\rho) \ge u_i(\Phi_i(\rho)), \qquad \forall i\in [k], \ \forall \Phi_i \in \cptp(H_i).
\end{align*}
\end{Def}

\begin{Def}
An \emph{$\epsilon$-approximate quantum Nash equilibrium ($\epsilon$-QNE)} is a quantum strategy $\rho = \rho_1 \otimes \ldots \otimes \rho_n$ for some mixed states $\rho_i$'s in $H_i$'s satisfying that 
\begin{align*}
	u_i(\Phi_i(\rho)) \leq u_i(\rho) + \epsilon, \qquad \forall i\in [k], \forall \Phi_i \in \cptp(H_i).
\end{align*}
\end{Def}

\begin{Def} \label{def:QCE}
A \emph{quantum correlated equilibrium (QCE)} is a quantum strategy $\rho$ in $H$ satisfying that 
\begin{align*}
	u_i(\rho) \ge u_i(\Phi_i(\rho)), \qquad \forall i\in [k], \ \forall \Phi_i \in \cptp(H_i).
\end{align*}
\end{Def}

\begin{Def}
An \emph{$\epsilon$-additively approximate quantum correlated equilibrium ($\epsilon$-QCE)} is a quantum state $\rho$ in $H$ satisfying that
\begin{align*}
	u_i(\Phi_i(\rho)) \leq u_i(\rho) + \epsilon, 
\end{align*}
for any $i$ and any admissible map $\Phi_i$ on $H_i$. For such states $\rho$, we say that the maximum \emph{quantum additive incentive} (to deviate) is the minimum $\epsilon$ with the above inequality satisfied. 

An \emph{$m$-multiplicatively approximate quantum correlated equilibrium ($\epsilon$-QCE)} of a nonnegative utility game is a quantum state $\rho$ in $H$ satisfying that
\begin{align*}
	u_i(\Phi_i(\rho)) \leq m\cdot u_i(\rho), 
\end{align*}
for any $i$ and any admissible map $\Phi_i$ on $H_i$. For such states $\rho$, we say that the maximum \emph{quantum multiplicative incentive} (to deviate) is the minimum $m$ with the above inequality satisfied. 
\end{Def}

One can also extend the $\epsilon$-QCE by allowing different $\epsilon_i$ for different $i$, resulting in $\{\epsilon_i\}$-QCE. By the linearity of admissible map $\Phi_i$, of quantum utility function $\mu_i$, and of expectation, it is easily seen that for any $\{\epsilon_i\}$, the set of $\{\epsilon_i\}$-QCE is convex. In particular, the set of QCE is also convex. Similar to the classical case, a quantum correlated equilibrium $\rho$ is a quantum Nash equilibrium if $\rho$ is a product state.

A final remark about QNE: One may wonder why not allow separable states, namely $\rho = \sum_t p_t (\rho_{t,1}\otimes \cdots \otimes \rho_{t,k})$ for some distribution $p$ and quantum states $\rho_{t,i}\in H_i$. The reason is that correlation then exists between players, so it includes the classical correlated equilibria as special cases. Our preference here is to let QNE to cover NE and QCE to cover CE, but QNE should not cover CE.

\section{Translations between classical and quantum equilibria}
This section studies the relation between classical and quantum equilibria. Basically we would like to consider all natural correspondences between classical and quantum states, and see how well they preserve the equilibrium properties. Thus there are two directions of mappings: from quantum to classical and and from classical to quantum. We will first list the correspondences and study them in detail in the subsections.

For the first direction, the most natural way to get a classical distribution from a quantum state is, as mentioned, to measure it in the computational basis:
\begin{equation}
	p(s) = \rho_{ss}, \text{ where } \rho_{ss} \text{ is the $(s,s)$-th entry of the matrix } \rho.
\end{equation}

Next we consider mappings from classical distributions $p$ over $S$ to quantum states on $H$. There seem to have more natural options. As far as we can think of, there are two specific mappings and a big class of correspondences including the two as special cases.
\begin{enumerate}
	\item \textbf{classical mixture}: $\rho(p) = \sum_s p(s) \ket{s}\bra{s}$, the mixture of the classical states. This is essentially an identity map, though when playing the quantum game the players are allowed to perform any quantum operations on it.
	\item \textbf{quantum superposition}: $\ket{\psi(p)} = \sum_s \sqrt{p(s)} \ket{s}$. With the superposition, this is really quantum and we expect to see some interesting and nontrivial phenomena. This is the most commonly used quantization of probability distributions when designing quantum algorithms. For example, recall that the starting state of Grover's search \cite{Gro97} and the states to define the reflection subspaces in Szegedy's quantization of random walks \cite{Sze04} are both of this form.  
	\item \textbf{general correspondence}: any density matrix $\rho$ with $p(s) = \rho_{ss}$ satisfied for all $s\in S$. This is the least requirement we want to put, and it is a large set of mappings containing the first two as special cases.
\end{enumerate}

Next we address the questions whether being equilibria in one world, classical or quantum, implies equilibria in the other world, and if not, how bad it can be. 

\subsection{From quantum to classical}
The following theorem says that the quantum equilibrium property always implies the classical one. The proof is not hard; one catch is that what we know for $\rho$ is that any quantum operation on $H_i$ cannot increase the \emph{expected} payoff. What we need to prove is, however, a \emph{worst-case} statement, namely that for any Player $i$ and any received strategy $s_i$, she should not change to any other $s_i'$. We just need to handle this distinction.

\begin{Thm}[QCE $\Rightarrow$ CE, QNE $\Rightarrow$ NE]\label{thm: QE to CE}
	If $\rho$ is a quantum correlated equilibrium, then $p$ defined by $p(s) = \rho_{ss}$ is a classical correlated equilibrium. In particular, if $\rho$ is a quantum Nash equilibrium, then $p$ is a classical Nash equilibrium.
\end{Thm}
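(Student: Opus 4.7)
The guiding idea is that every classical single-player deviation in the CE definition can be implemented as a CPTP map on $H_i$, so the QCE hypothesis directly forces the CE inequality. Concretely, for any pair $s_i^*, s_i' \in S_i$ I would define the map
\[
	\Phi_i(\sigma) = \bra{s_i^*}\sigma\ket{s_i^*}\,\ket{s_i'}\bra{s_i'} + \sum_{t\neq s_i^*} \bra{t}\sigma\ket{t}\,\ket{t}\bra{t},
\]
i.e.\ measure in the computational basis of $H_i$ and, if the outcome is $s_i^*$, replace it by $s_i'$, leaving every other outcome intact. This is a measure-then-prepare channel, hence CPTP, and extended to $H$ it acts only on the $i$-th tensor factor, so it is an admissible operation for Player $i$ in the sense of Definition \ref{def:QCE}.

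Next I would compute $u_i(\Phi_i(\rho))$ in terms of the induced distribution $p(s) = \rho_{ss}$. Since $u_i$ depends only on the diagonal of the final state, and $\Phi_i$ acts on the diagonal exactly like the classical substitution described above,
\[
	u_i(\Phi_i(\rho)) = u_i(\rho) + \sum_{s_{-i}} p(s_i^*, s_{-i})\bigl(u_i(s_i', s_{-i}) - u_i(s_i^*, s_{-i})\bigr).
\]
The QCE hypothesis $u_i(\Phi_i(\rho)) \le u_i(\rho)$ then gives
\[
	\sum_{s_{-i}} p(s_i^*, s_{-i}) u_i(s_i^*, s_{-i}) \ge \sum_{s_{-i}} p(s_i^*, s_{-i}) u_i(s_i', s_{-i}),
\]
which is exactly the classical CE inequality at the pair $(s_i^*, s_i')$. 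Since $s_i^*$ and $s_i'$ were arbitrary (and any $i$ is treated symmetrically), $p$ is a classical correlated equilibrium.

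For the QNE $\Rightarrow$ NE statement, once we know $p$ is a CE it suffices to observe that $p$ is a product distribution: if $\rho = \rho_1 \otimes \cdots \otimes \rho_k$ then $p(s) = \prod_i (\rho_i)_{s_i s_i}$, so $p$ factors, and a product CE is by definition an NE. The main obstacle the author flags is the gap between the QCE condition, which concerns the \emph{expected} payoff after a single operation $\Phi_i$, and the CE condition, which demands a \emph{pointwise} inequality for every $(s_i, s_i')$; I handle this by choosing one tailored $\Phi_i$ per pair rather than a single averaged map, so that the QCE inequality specializes directly to the desired pointwise statement. Everything else is a routine check that the constructed $\Phi_i$ is CPTP and that its action on the diagonal reproduces the classical deviation.
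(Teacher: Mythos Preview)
Your proof is correct and follows essentially the same approach as the paper: for each fixed pair $(s_i^*,s_i')$ you build a CPTP map on $H_i$ that measures in the computational basis and replaces the outcome $s_i^*$ by $s_i'$, then read off the CE inequality from $u_i(\Phi_i(\rho))\le u_i(\rho)$. The paper writes the same channel slightly differently (projections $P_{t_i}$ plus a swap $(s_i\leftrightarrow s_i')$ on the $s_i$ block), but the resulting map and the payoff computation are identical to yours.
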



\begin{proof}
Recall that we are given that $\mu _i(\rho) \ge \mu _i(\Phi_i(\rho) )$ for all players $i$ and all admissible super-operators $\Phi_i$ on $H_i$, and we want to prove that for all players $i$ and all strategies $s_i, s'_i\in S_i$,
\begin{equation}\label{eq: CE}
	\sum_{s_{-i}} p(s_i,s_{-i}) u_i(s_i,s_{-i}) \geq  \sum_{s_{-i}} p(s_i,s_{-i}) u_i(s_i',s_{-i})
\end{equation}
for $p(s) = \rho_{ss}$.

Fix $i$ and $s_i, s'_i$. Consider the admissible super-operator $\Phi_i$ defined by
\begin{equation}
	\Phi_i = \sum_{t_i \neq s_i} P_{t_i} \rho P_{t_i} + (s_i \leftrightarrow s_i') P_{s_i} \rho P_{s_i} (s_i \leftrightarrow s_i')
\end{equation}
where $P_{t_i}$ is the projection onto the subspace $span(t_i)\otimes H_{-i}$, and $(s_i \leftrightarrow s_i')$ is the operator swapping $s_i$ and $s_i'$. It is not hard to verify that $\Phi_i$ is an admissible super-operator. Next we will show that the difference of $\mu_i(\rho)$ and $\mu_i(\Phi_i(\rho))$ is the same as that of the two sides of Eq. \eqref{eq: CE}.

\begin{align}
\nonumber \mu_i(\rho) & = \av[u_i(s(\rho))]\\
\nonumber & = \sum_{\bar s\in S} \bra{\bar s}\rho \ket{\bar s} u_i(\bar s) = \sum_{\bar s\in S} p(\bar s) u_i(\bar s) \\
 & = \sum_{\bar s_i \neq s_i} \sum_{\bar s_{-i}} p(\bar s)u_i(\bar s) + \sum_{\bar s_{-i}} p(s_i\bar s_{-i})u_i(s_i\bar s_{-i})
\end{align}

\begin{align}
\nonumber\mu_i(\Phi_i(\rho)) & = \sum_{\bar s\in S} \bra{\bar s} \Phi_i(\rho) \ket{\bar s} u_i(\bar s) \\
\nonumber& = \sum_{\bar s\in S} \bra{\bar s} \sum_{t_i \neq s_i} P_{t_i} \rho P_{t_i} + (s_i \leftrightarrow s_i') P_{s_i} \rho P_{s_i} (s_i \leftrightarrow s_i') \ket{\bar s} u_i(\bar s) \\
\nonumber& = \sum_{\bar s\in S} \bra{\bar s} \sum_{t_i \neq s_i} P_{t_i} \rho P_{t_i} \ket{\bar s} u_i(\bar s) + \sum_{\bar s\in S} \bra{\bar s} (s_i \leftrightarrow s_i') P_{s_i} \rho P_{s_i} (s_i \leftrightarrow s_i') \ket{\bar s} u_i(\bar s) \\
& = \sum_{t_i \neq s_i}\sum_{\bar s_{-i}} p(t_i \bar s_{-i}) u_i(t_i \bar s_{-i}) + \sum_{\bar s_{-i}} p(s_i \bar s_{-i}) u_i(s'_i\bar s_{-i}) \label{eqn:1}
\end{align}
where in the last equality we used the fact that $P_{t_i}\ket{\bar s} = \ket{t_i \bar s_{-i}}$ if $\bar s_i = t_i$ and 0 otherwise; similar equality used for the second summand.

Since $\rho$ is a quantum correlated equilibrium, we have $\mu_i(\rho) \geq \mu_i(\Phi_i(\rho))$. Comparing the above two expressions for $\mu_i(\rho)$ and $\mu_i(\Phi_i(\rho))$ gives Eq. \eqref{eq: CE}, as desired.
\end{proof}

Many precious work try to find a quantum equilibrium with ``better" payoff than all classical ones, for example, to attempt to resolve the Prisoner's dilemma by showing a quantum equilibrium with payoff of both players better than the classical (unique) equilibrium. The theorem above implies that at least in our model, this is simply not possible. We actually think that this should be a property that reasonable quantization models should satisfy. 

\subsection{From classical to quantum: The classical mixture mapping and its conceptual implications}
The implication from classical to quantum turns out to be much more complicated. Let us consider the three types of mappings one by one. Recall that the first mapping $\rho(p) = \sum_s p(s) \ket{s}\bra{s}$ is the mixture of the classical states. The following theorem says that this always yields a quantum equilibrium from a classical equilibrium. That is, the utility $p_i(s)$ cannot be increased for a classical equilibrium even when player $i$ is allowed to have quantum operations.
\begin{Thm}[$p$ CE/NE $\Rightarrow \rho(p)$ QCE/QNE]\label{thm: classical mix}
	If $p$ is a (classical) correlated equilibrium, then $\rho(p) = \Sigma_{s\in S} p(s) \ket{s}\bra{s}$ is a quantum correlated equilibrium. In particular, if $p$ is a Nash equilibrium, then $\rho$ as defined is a quantum Nash equilibrium.
\end{Thm}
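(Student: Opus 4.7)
The plan is to exploit the fact that $\rho(p)$ is diagonal in the computational basis, so the only thing about $\Phi_i(\rho(p))$ that matters for Player $i$'s utility is its diagonal. I would first observe that for any CPTP map $\Phi_i$ on $H_i$, the action on a basis projector $\ket{s_i}\bra{s_i}$ produces a density matrix whose diagonal defines a conditional distribution
\begin{equation*}
q(s_i'\mid s_i) \;\defeq\; \bra{s_i'}\,\Phi_i(\ket{s_i}\bra{s_i})\,\ket{s_i'},
\end{equation*}
and trace preservation of $\Phi_i$ guarantees $\sum_{s_i'} q(s_i'\mid s_i) = 1$ for every $s_i$. So, as far as the final measurement in the computational basis is concerned, $\Phi_i$ acting on the classical mixture $\rho(p)$ is indistinguishable from the classical randomized deviation $q(\cdot\mid\cdot)$ applied to Player $i$'s coordinate.

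Concretely, I would expand
\begin{equation*}
\Phi_i(\rho(p)) \;=\; \sum_{s_i,s_{-i}} p(s_i,s_{-i})\,\Phi_i(\ket{s_i}\bra{s_i})\otimes \ket{s_{-i}}\bra{s_{-i}},
\end{equation*}
read off the diagonal entries, and substitute into the utility formula $u_i(\sigma) = \sum_{s} \bra{s}\sigma\ket{s}\,u_i(s)$. A routine relabeling then gives
\begin{equation*}
u_i(\Phi_i(\rho(p))) \;=\; \sum_{s_i} \sum_{s_i'} q(s_i'\mid s_i)\,\sum_{s_{-i}} p(s_i,s_{-i})\,u_i(s_i',s_{-i}).
\end{equation*}
Now I apply the classical correlated equilibrium inequality from Definition \ref{thm:CE} for each fixed pair $(s_i,s_i')$: $\sum_{s_{-i}} p(s_i,s_{-i})\, u_i(s_i',s_{-i}) \le \sum_{s_{-i}} p(s_i,s_{-i})\, u_i(s_i,s_{-i})$. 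Taking the convex combination over $s_i'$ with weights $q(s_i'\mid s_i)$, using $\sum_{s_i'} q(s_i'\mid s_i) = 1$, and then summing over $s_i$ collapses the right-hand side to $\sum_s p(s) u_i(s) = u_i(\rho(p))$, giving $u_i(\Phi_i(\rho(p))) \le u_i(\rho(p))$, as required by Definition \ref{def:QCE}.

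The Nash case is immediate from the correlated case since a product distribution $p = p_1 \times \cdots \times p_k$ maps to the product state $\rho(p) = \rho(p_1) \otimes \cdots \otimes \rho(p_k)$, so the QCE just proved is automatically a QNE. There is no real obstacle in the argument; the only point that requires slight care is confirming that CPTP maps on a diagonal input behave, at the level of the output's diagonal, exactly like classical stochastic maps on the corresponding index, with the trace-preservation property supplying the required normalization. Once this reduction is stated, the theorem is essentially a convexity argument over the classical CE inequality.
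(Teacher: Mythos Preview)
Your argument is correct and follows essentially the same route as the paper: the paper's proof is a one-paragraph sketch observing that on the diagonal state $\rho(p)$ any local CPTP map followed by measurement is just a classical stochastic map on $S_i$ that cannot beat the CE inequality, and your write-up simply spells out the conditional kernel $q(s_i'\mid s_i)$ and the convex-combination step explicitly. No differences worth noting.
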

\begin{proof}
	Since the state $\rho(p)$ is essentially a classical one, whatever operation on $H_i$, followed by the measurement in the computational basis, only gives a new distribution over $S_i$ without affecting the distribution of $s_{-i}$. Since classically changing the given $s_i$ to any $s_i'$ does not increase the expected payoff, changing $s_i$ to a random $s_i'$ according to the new distribution does not give any advantage either. 
\end{proof}

The reason we still mention this technically trivial result is because it has a couple of conceptually important implications. First, together with Theorem \ref{thm: QE to CE}, it gives a one-one correspondence between classical Nash/correlated equilibria and a subset of quantum Nash/correlated equilibria. This can be used with Theorem \ref{thm: existence of NE} to answer the basic question of the existence of a quantum Nash equilibrium.

\begin{Cor}\label{cor: QNE-exist}
	Every game with a finite number of players and a finite set of strategies for each player has a quantum Nash equilibrium.
\end{Cor}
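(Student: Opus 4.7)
The plan is to deduce this corollary by chaining two results already stated in the excerpt: Nash's existence theorem (Theorem~\ref{thm: existence of NE}) gives a classical mixed Nash equilibrium in any finite game, and Theorem~\ref{thm: classical mix} converts any classical Nash equilibrium into a quantum Nash equilibrium via the classical mixture mapping $p \mapsto \rho(p) = \sum_s p(s)\ket{s}\bra{s}$. So the structure is just: invoke Nash, then invoke the classical-mixture transfer.

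Concretely, I would first apply Theorem~\ref{thm: existence of NE} to obtain a mixed Nash equilibrium $p = p_1 \times \cdots \times p_k$ of the given game, which exists because the game has finitely many players and each $S_i$ is finite. Then I would form
\begin{equation*}
	\rho(p) \;=\; \sum_{s\in S} p(s)\,\ket{s}\bra{s}.
\end{equation*}
The one point that needs a line of verification (and which distinguishes this from a mere restatement of Theorem~\ref{thm: classical mix}) is that $\rho(p)$ is in fact a product state, so that it qualifies as a quantum \emph{Nash} equilibrium and not merely a quantum correlated equilibrium. This follows from the product form of $p$: writing $\rho_i = \sum_{s_i\in S_i} p_i(s_i)\,\ket{s_i}\bra{s_i}$, one has
\begin{equation*}
	\rho(p) \;=\; \sum_{s_1,\ldots,s_k} p_1(s_1)\cdots p_k(s_k)\,\ket{s_1\cdots s_k}\bra{s_1\cdots s_k} \;=\; \rho_1 \otimes \cdots \otimes \rho_k.
\end{equation*}
Combined with the ``in particular'' clause of Theorem~\ref{thm: classical mix}, which says exactly that $\rho(p)$ is a quantum Nash equilibrium whenever $p$ is a classical Nash equilibrium, this completes the argument.

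There is no real obstacle here; the corollary is a straightforward assembly of the two prior theorems, and its whole content is to record that Nash's existence theorem automatically lifts to the quantum model introduced in this paper. The only thing to be slightly careful about is not to confuse QNE with QCE, which is why I would explicitly unpack the tensor-product form of $\rho(p)$ rather than just citing Theorem~\ref{thm: classical mix} verbatim.
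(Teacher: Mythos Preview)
Your proposal is correct and matches the paper's own reasoning: the corollary is stated there as an immediate consequence of combining Nash's existence theorem (Theorem~\ref{thm: existence of NE}) with the classical-mixture transfer (Theorem~\ref{thm: classical mix}). Your extra line verifying that $\rho(p)$ factors as $\rho_1\otimes\cdots\otimes\rho_k$ is a welcome clarification but is already implicit in the ``in particular'' clause of Theorem~\ref{thm: classical mix}.
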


Second, one also notices that there is a one-one correspondence between the utility values in classical and quantum games. 
This immediately transfers all the \textbf{NP}-hardness results for finding an optimal Nash or correlated equilibrium \cite{GZ89} to the corresponding quantum ones.

Theorem \ref{thm: QE to CE} and \ref{thm: classical mix} also help to answer a basic question about the hardness of finding a quantum Nash equilibrium. One subtlety for quantum Nash equilibria is that it is a quantum state, so we need to first define what it means by ``finding" a quantum equilibrium: Is it sufficient to generate one, or to fully specify the state by giving all the matrix entries. It turns out that these two definitions are close to each other. 

\begin{Thm}\label{thm: QNE-PPAD}
Suppose that there is a polynomial-time quantum algorithm for finding a quantum Nash equilibrium $\rho$, with the guarantee that every execution of the algorithm gives the same $\rho$. Then there is a polynomial-time quantum algorithm to solve any problem in \ppad.	
\end{Thm}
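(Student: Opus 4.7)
I would reduce the PPAD-complete problem of finding a $1/\mathrm{poly}(n)$-approximate classical Nash equilibrium in a bimatrix game \cite{DGP09,CDT09} to the assumed quantum algorithm. The conceptual bridge is Theorem \ref{thm: QE to CE}: if $\rho$ is any quantum Nash equilibrium of a classical game $G$, then its diagonal $p(s)=\rho_{ss}$ is an \emph{exact} classical Nash equilibrium of $G$. So the reduction only needs to produce a sufficiently accurate classical description of this diagonal.

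Given a PPAD instance, first encode it as a bimatrix game $G$ of polynomial size $n$ whose $1/\mathrm{poly}(n)$-approximate Nash equilibria decide the instance, then feed $G$ to the assumed quantum algorithm to obtain $\rho$. If the algorithm returns a classical description of $\rho$, we read the diagonal off directly. Otherwise, I would invoke the guarantee that every execution returns the same $\rho$: run the algorithm $T$ times and measure each output in the computational basis, obtaining $T$ independent samples from $p$. Let $\hat p$ be the resulting empirical distribution. By Chernoff--Hoeffding with a union bound over the at most $n^2$ joint strategies, $|\hat p(s)-p(s)|\leq \epsilon$ for all $s$ with high probability as soon as $T=\mathrm{poly}(n,1/\epsilon)$. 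Because $p$ is an exact NE and utilities lie in $[0,1]$, $\hat p$ is then an $O(n^2\epsilon)$-approximate NE; choosing $\epsilon=1/\mathrm{poly}(n)$ small enough yields the $1/\mathrm{poly}(n)$-approximate NE that decides the PPAD instance, all in polynomial quantum time.

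\textbf{Main obstacle.} Granted Theorem \ref{thm: QE to CE}, the reduction itself is short, and the substance of the statement lies in two places. First, the uniqueness assumption that every execution returns the \emph{same} $\rho$ is genuinely used: if distinct runs could return distinct quantum Nash equilibria, our measurement samples would effectively be drawn from a convex combination of equilibria, and a mixture of Nash equilibria is in general only a correlated equilibrium, not a Nash equilibrium. Since correlated equilibria can be found in $\mathbf{P}$ via linear programming, this would give no PPAD-hardness. Second, one must argue that the per-entry additive error between $\hat p$ and $p$ translates into a comparable additive error in the maximum incentive to deviate, so that the approximation regime in which approximate NE is PPAD-complete is actually reached; this is routine because the incentive to deviate is a linear functional of $p$, so an $O(1/\mathrm{poly}(n))$ error in $p$ produces at most an $O(1/\mathrm{poly}(n))$ error in the incentive.
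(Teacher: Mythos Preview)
Your proposal is essentially the same argument the paper sketches: run the quantum NE-finder (repeatedly), measure in the computational basis to sample from $p(s)=\rho_{ss}$, estimate $p$ to inverse-polynomial accuracy, and invoke the \ppad-hardness of $1/\mathrm{poly}(n)$-approximate NE from \cite{CDT09}. The paper's own proof is just this one-sentence sketch with ``details omitted,'' so you have in fact written out more than the paper does, and your explanation of why the uniqueness hypothesis is needed (a mixture of distinct QNEs would only yield a CE) is a point the paper does not make explicit.

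One small technical caveat: the empirical joint distribution $\hat p$ you form from the samples is not in general a product distribution, so it is not literally a candidate for an (approximate) Nash equilibrium in the sense of Definition~\ref{def: approx CE}. What you should output is the product $\hat p_1\times \hat p_2$ of the empirical \emph{marginals}; since $p=p_1\times p_2$ is a product, the marginals $\hat p_i$ concentrate around $p_i$ and $\hat p_1\times\hat p_2$ is then the desired $1/\mathrm{poly}(n)$-approximate NE. This is an easy fix and does not affect the structure of your argument.
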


Basically once having found a quantum Nash equilibrium $\rho$, one can use measurement in the computational basis to get a sample according to $p(\rho)$. Then taking an average of enough number of such samples gives a good enough (an inverse polynomial, to be precise) approximation, and then we can apply the hardness result of finding an approximate NE in \cite{CDT09}. Details are omitted.

\subsection{From classical to quantum: The quantum superposition mapping and its extremal properties}\label{sec: superposition}

The second way of inducing a quantum state from a classical distribution is by quantum superposition $\ket{\psi(p)} = \sum_s \sqrt{p(s)} \ket{s}$. This case is subtler than the classical mixture mapping: While an argument similar to that for Theorem \ref{thm: classical mix} shows that the quantum superposition mapping preserves Nash equilibrium property, it is not immediate to see whether it also does so for correlated equilibria. 

We consider to find the maximum incentive in two-player games, in which without loss of generality we can assume that the second player always getting payoff $1$. Indeed, any CE of any other bimatrix game $(A,B)$ is also a CE of $(A,J)$. We will formulate the maximum incentive finding problem over $n\times n$ bimatrix games $(A,J)$ by an optimization in Section \ref{sec:q-superposition opt} and give solution for the special case of $n=2$ in Section \ref{sec:q-superposition n=2}. Then for the general bimatrix games $(A,B)$, we will also consider $n\times n$ bimatrix game with Player 2's payoff being the all-one matrix, though our solutions are also CE for bimatrix game $(I_n,I_n)$, a natural extension of the Battle of the Sexes game.

\subsubsection{Maximum quantum incentive on $\ket{\psi(p)}$ as an optimization problem}\label{sec:q-superposition opt}
A CPTP operation $\Phi$ by Player 1 followed by the measurement in the computational basis $\{1, 2, \ldots, n\}$ gives a general POVM measurement $\{E_i: i\in [n]\}$. Suppose Player 1's payoff matrix is $A = [a_{ij}]$. Then Player $1$'s new payoff, \ie the payoff for playing $\Phi$, is 
\[
	\sum_{i,j\in [n]} a_{ij} \Big(\sum_{i_1\in [n]} \sqrt{p_{i_1 j}} \bra{i_1}\Big) E_i \Big(\sum_{i_2\in [n]} \sqrt{p_{i_2 j}} \ket{i_2} \Big)
\]
For simplicity let us use a short notation $\ket{\sqrt{p_j}}$ for $\sum_{i\in [n]}\sqrt{p_{i j}} \ket{i}$. Then the above payoff is $\sum_{i,j} a_{ij} \bra{\sqrt{p_j}} E_i \ket{\sqrt{p_j}}$. Thus the maximum quantum additive incentive on $\ket{\psi(p)}$ for a CE $p$ can be written as the following optimization problem. 
\begin{align*}
\textbf{Primal: } \qquad  \max & \quad \sum_{i,j\in [n]} a_{ij} (\bra{\sqrt{p_j}} E_i \ket{\sqrt{p_j}}  - p_{ij}) \\
	\mbox{s.t.} & \quad 0\leq a_{ij} \leq 1, \quad \forall i,j\in [n] & (\text{The game is [0,1]-normalized.})\\
	& \quad \sum_{ij}p_{ij} = 1, \quad p_{ij} \geq 0, \quad \forall i,j\in [n] & (p \text{ is a distribution.})\\
	& \quad \sum_j a_{ij}p_{ij} \geq \sum_j a_{i'j}p_{ij}, \quad \forall i,i',j\in [n] &  (p \text{ is a correlated equilibrium.})\\
	& \quad \sum_i E_i = I_n, \quad E_i \succeq 0, \quad \forall i\in [n] & (\{E_i\} \text{ is a POVM measurement.})
\end{align*}
And the maximum quantum multiplicative incentive is the same except the objective function now becomes $(\sum_{i,j\in [n]} a_{ij} \bra{\sqrt{p_j}} E_i \ket{\sqrt{p_j}} ) / (\sum_{i,j\in [n]} a_{ij}p_{ij}))$.

Note that the objective function is highly non-concave\footnote{Sometimes people say convex programming for convex minimization, or equivalently as in our case, concave maximization.}, which makes the problem generally hard to compute or analyze. (The non-concavity can be witnessed by the optimal solution of the case of $n = 2$ shortly.) One way for handling this is to fix some of the variables and consider the dual of the remaining problem. If we fix $A = [a_{ij}]$ and $P = [p_{ij}]$, then it is a semi-definite program with variable $E_i$'s. The dual of it is the following.

\begin{align*}
	\textbf{Dual}(A,P): \qquad \qquad \min & \quad Tr(Y) - \sum_{i,j\in [n]}a_{ij}p_{ij} \\
	\mbox{s.t.} & \quad Y \succeq \sum_{j\in [n]} a_{ij} \ket{\sqrt{p_j}} \bra{\sqrt{p_j}}, \quad \forall i\in [n]
\end{align*}
One can also write down the dual for the multiplicative incentive optimization primal by simply changing the subtraction to division in the objective function; note that it is still linear in $E_i$'s for fixed $A$ and $P$. Sometimes working with dual helps to establish the optimality of the objective function value on a primal feasible solution that we find. 

\subsubsection{Complete solution of $2\times 2$ games}\label{sec:q-superposition n=2}
We first study $2\times 2$ games, which turns out to be nontrivial already, and the experiences we obtain here will be useful later for general games. First it is not hard to see that in an optimal solution, all $a_{ij}$'s are either 0 or 1. Then one can see that only $A = I_2$ or $A = \begin{bmatrix} 0 & 1 \\ 1 & 0\end{bmatrix}$ may admit positive incentive. Since permuting columns or rows will not change the optimal value, let us assume that $A = I$ in the following. We do not know whether $A=I$ is also a maximizer for the general problem; it is an interesting open question.

The equilibrium property implies that $p_{11}\geq p_{12}$ and $p_{22} \geq p_{21}$. By the requirement $E_1 \succeq 0$ and $E_2 = I-E_1 \succeq 0$, we can assume that the optimal value $E_1 = \begin{bmatrix} a & c \\ c^* & 1-b \end{bmatrix}$, where $a, b \in [0,1]$, and $|c|^2 \leq \min\{a(1-b), b(1-a)\}$. Then the primal value is 
\begin{align}
	& \ \bra{\sqrt{p_1}} E_1 \ket{\sqrt{p_1}} + \bra{\sqrt{p_2}} E_2 \ket{\sqrt{p_2}} - p_{11} - p_{22} \\
	= & \ 2\cdot Re(c) \cdot (\sqrt{p_{11}p_{21}} - \sqrt{p_{12}p_{22}}) - (p_{11}-p_{12})(1-a) - (p_{22}-p_{21})(1-b) \\
	= & \ 2\cdot |Re(c)| \cdot \big|\sqrt{p_{11}p_{21}} - \sqrt{p_{12}p_{22}}\big| - (p_{11}-p_{12})(1-a) - (p_{22}-p_{21})(1-b).
\end{align}
where the last equality is because the optimal value is nonnegative and thus 
\begin{equation}
	Re(c) \cdot (\sqrt{p_{11}p_{21}} - \sqrt{p_{12}p_{22}}) \geq (p_{11}-p_{12})(1-a) + (p_{22}-p_{21})(1-b) \geq 0.
\end{equation} Further, in a maximizer, $|Re(c)|$ should be as large as possible, so it holds that $c\in \mbR$ and either $c^2 = a(1-b)$ or $c^2 = b(1-a)$. We claim that actually both hold and thus $a = b$. Actually, if $|c|^2 = a(1-b)<b(1-a)$, then $a<b$, and it can be observed that the objective function increases with $a$. So one can increase $a$ up to $b$; the other case of $a>b$ can be argued in the same way. 

Now the primal value becomes 
\begin{equation}
	2\sqrt{a(1-a)} \cdot \big|\sqrt{p_{11}p_{21}} - \sqrt{p_{12}p_{22}}\big| - (p_{11}-p_{12} + p_{22}-p_{21})(1-a).
\end{equation}
By simultaneously switching the two rows and columns, one can assume that $p_{11}p_{21} \leq p_{12}p_{22}$. (We need to switch rows and columns simultaneously because we have already assumed the matrix to be $I$.) Then the optimal value is 
\begin{align}
	OPT & = 2\sqrt{a(1-a)} \cdot (\sqrt{p_{12}p_{22}} - \sqrt{p_{11}p_{21}}) - (p_{11}-p_{12} + p_{22}-p_{21})(1-a) \\
	& \leq 2\sqrt{a(1-a)} \cdot \left(\sqrt{\frac{p_{11}+p_{12}}{2}p_{22}} - \sqrt{\frac{p_{11}+p_{12}}{2}p_{21}}\right) - (p_{22}-p_{21})(1-a)
\end{align}
That is, we shift mass from $p_{11}$ to $p_{12}$ and the objective function always increases. This can be done as long as the equilibrium properties is maintained, namely $p_{11} \geq p_{12}$. Since $P$ is maximizer, we know that $p_{11} = p_{12}$. Thus 
\begin{align}
	OPT & = (2\sqrt{a(1-a)} \sqrt{p_{11}} - (1-a)(\sqrt{p_{22}} + \sqrt{p_{21}})) (\sqrt{p_{22}} - \sqrt{p_{21}}) \\
	& \leq \big(2\sqrt{a(1-a)} \sqrt{p_{11}} - (1-a)(\sqrt{p_{22} + p_{21}})\big) \sqrt{p_{22}}
\end{align}
Thus if we shift mass from $p_{21}$ to $p_{22}$, then the objective function value increases. So the maximizer $p$ has $p_{21} = 0$, and we have
\begin{align}
	OPT & = 2\sqrt{a(1-a)} \sqrt{p_{11}(1-2p_{11})} - (1-a)(1-2p_{11})
\end{align}
Now by looking at the partial derivative (and setting it to be zero), it is not hard to finally find that $p_{11}^* = \sqrt{2}/4$ and $a^* = \sqrt{2}/2$ give the maximum value $(\sqrt{2}-1)/2$, which is the maximum quantum additive incentive. The corresponding optimal solutions for the primal and the dual are as follows. 
\begin{align}
	\text{Additive OPT}: \qquad & (\sqrt{2}-1)/2 = 0.2071...\\
	\text{Primal solution}: \qquad & P = \begin{bmatrix} p_{11}^* & p_{11}^* \\ 0 & 1-2p_{11}^* \end{bmatrix}, \\
	& E_1 = \begin{bmatrix} 2p_{11}^* & -\sqrt{2p_{11}^*(1-2p_{11}^*)} \\ -\sqrt{2p_{11}^*(1-2p_{11}^*)} & 1-2p_{11}^* \end{bmatrix}, \quad E_2 = I - E_1, \\
	\text{Dual solution}: \qquad & Y = \begin{bmatrix} 1/2 & \sqrt{p_{11}^*(1/2-p_{11}^*)} \\ \sqrt{p_{11}^*(1/2-p_{11}^*)} & p_{11}^* \end{bmatrix}.
\end{align}

\vspace{1em}
The solution also confirms that the objective function of the Primal for additive incentive is not concave. Indeed, by symmetry, another optimal solution for Primal is 
\begin{equation}
	P' = \begin{bmatrix} 1-2p_{11}^* & 0 \\ p_{11}^* & p_{11}^* \end{bmatrix}, \quad E_1' = \begin{bmatrix} 2p_{11}^* & \sqrt{2p_{11}^*(1-2p_{11}^*)} \\ \sqrt{2p_{11}^*(1-2p_{11}^*)} & 1-2p_{11}^* \end{bmatrix}, \quad E_2' = I - E_1'.
\end{equation}
But the average of the two solutions gives a negative objective value. 

One may wonder whether the objective function is concave ``with respect to" $p$, that is, if we are allowed to take optimal $E$ for each $p$. Unfortunately it is still not concave: Actually for $(P+P')/2$ there is not any positive incentive, as can be witnessed by the dual matrix 
\begin{equation}
	Y = \begin{bmatrix} (1-p_{11}^*)/2 & \sqrt{p_{11}^*(1-p_{11}^*)}/2 \\ \sqrt{p_{11}^*(1-p_{11}^*)}/2 & (1-p_{11}^*)/2 \end{bmatrix}.
\end{equation}
It can be easily verified that $Y$ is a feasible solution for the dual, and it gives the value $Tr(Y) - Tr(P) = 0$, which is an upper bound of the optimal value for this $(P+P')/2$.

Using a similar method, one can also find that the maximum quantum multiplicative incentive is 4/3. The optimal solutions of the primal and dual are as follows. 
\begin{align}
	\text{Multiplicative OPT}: \qquad & 4/3, \\
	\text{Primal solution}: \qquad & P = \begin{bmatrix} 2/5 & 2/5 \\ 0 & 1/5 \end{bmatrix}, \\
	& E_1 = \begin{bmatrix} 2/3 & -\sqrt{2}/3 \\ -\sqrt{2}/3 & 1/3 \end{bmatrix}, \quad E_2 = I - E_1, \\
	\text{Dual solution}: \qquad & Y = \begin{bmatrix} 8/15 & 2\sqrt{2}/15 \\ 2\sqrt{2}/15 & 4/15 \end{bmatrix}.
\end{align}
We have then completely solved the case of $n=2$.

\subsubsection{Lower bounds for general games}
Next we will study game of the general size $n$ and prove first part of Theorem  \ref{thm:main1}. Note that the \emph{ad hoc} analysis used in previous part cannot be generalized in any straightforward way to the general case. However, some insights obtained there are useful in the later construction. 

We will exhibit a family of games and correlated equilibria $p$ such that the quantum incentive in $\ket{\psi(p)}$ increases with the size of the game. Before giving the construction, let us briefly discuss the intuition. Suppose we already have a small game matrix $A$ and a correlated equilibrium $p$ with positive quantum additive incentive on $\ket{\psi(p)}$. How to construct a larger game with a larger quantum additive incentive? Note that we are to find a distribution $p'$ satisfying two requirements: First, it is a CE of the larger game, and second, $\ket{\psi(p')}$ has a larger quantum additive incentive. It turns out that tensor product can satisfy both properties if the parameters are good. 


\begin{Lem}
	For two bimatrix games $(A_1,B_1)$ and $(A_2,B_2)$ with two correlated equilibria $p_1$ and $p_2$ (of the two games respectively), suppose Player 1's expected payoff on $\ket{\psi(p_i)}$ is $u_i$, and her maximum quantum additive and multiplicative incentives on $\ket{\psi(p_i)}$ are $a_i$ and $m_i$. Then the distribution $p_1\otimes p_2$ is a correlated equilibrium of the larger game $(A_1\otimes A_2, B_1\otimes B_2)$, and the maximum quantum additive and multiplicative incentives on $\ket{\psi(p_1\otimes p_2)}$ are at least $(u_1+a_1)(u_2+a_2)-u_1u_2$ and $m_1m_2$, respectively.
\end{Lem}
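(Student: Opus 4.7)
My plan is to exploit the fact that both the state $\ket{\psi(p_1\otimes p_2)}$ and the product-game payoff matrix $A_1\otimes A_2$ factor across the two games, so that tensoring the individually optimal local deviations combines their gains multiplicatively.

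First, for the classical CE claim, I would fix an arbitrary deviation $\phi: S_1^{(1)}\times S_1^{(2)} \to S_1^{(1)}\times S_1^{(2)}$ that Player 1 might apply to advice $(s,s')$. The conditional distribution on the opponent's advice is $(p_1\otimes p_2)|_{(s,s')} = p_1|_{s}\otimes p_2|_{s'}$, and $(A_1\otimes A_2)[(s,s'),(t,t')] = A_1[s,t]\cdot A_2[s',t']$, so the conditional expected payoff of any response $(\sigma,\sigma')$ factors into two one-coordinate expected payoffs. Each factor is separately maximized by the recommendation because $p_i$ is a CE of $(A_i,B_i)$; nonnegativity of payoffs (the games are $[0,1]$-normalized) then lets me multiply the two inequalities, showing $\phi$ gives no gain and hence that $p_1\otimes p_2$ is a CE of the product game.

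Second, for the quantum bound, I would begin with the identity $\ket{\psi(p_1\otimes p_2)} = \ket{\psi(p_1)}\otimes \ket{\psi(p_2)}$ (after reordering tensor factors), which follows from the amplitude of any basis state being the product of the corresponding amplitudes, and which also means Player 1's strategy space in the product game is precisely the tensor product of her two per-game spaces. Let $\Phi^{(i)}$ be a CPTP map on her share that realizes her optimal incentive on $\ket{\psi(p_i)}$ in game $i$, so that her expected payoff under $\Phi^{(i)}$ is at least $u_i+a_i$ additively and at least $m_i u_i$ multiplicatively. I would apply the tensor-product deviation $\Phi^{(1)}\otimes \Phi^{(2)}$ in the product game. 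Because both the state and the CPTP map factor across games, the full post-measurement joint distribution factors across games as well, and since the product payoff $u_1^{\otimes}((s,s'),(t,t')) = u_1^{(1)}(s,t)\cdot u_1^{(2)}(s',t')$ also factors, her expected payoff in the product game equals the product of her two per-game expected payoffs. Nonnegativity then gives lower bounds $(u_1+a_1)(u_2+a_2)$ and $m_1 m_2 u_1 u_2$; subtracting or dividing the baseline $u_1 u_2$ yields the stated additive bound $(u_1+a_1)(u_2+a_2)-u_1 u_2$ and multiplicative bound $m_1 m_2$.

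The only mildly delicate step is the multiplication of inequalities, which relies on payoffs being nonnegative; every other step is tensor-product bookkeeping: a product state acted on by a product CPTP map yields a product distribution when measured in the product basis, and this multiplies nicely against a product payoff function. I therefore anticipate no genuine obstacle.
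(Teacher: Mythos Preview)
Your proposal is correct and follows essentially the same approach as the paper: factor the conditional distribution and the payoff across the two games to verify the CE property, and then apply the tensor product $\Phi^{(1)}\otimes\Phi^{(2)}$ of the individually optimal local deviations on $\ket{\psi(p_1\otimes p_2)}=\ket{\psi(p_1)}\otimes\ket{\psi(p_2)}$ to get a product of per-game payoffs. If anything, you are slightly more explicit than the paper in flagging that multiplying the two per-game inequalities uses nonnegativity of the payoffs.
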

\begin{proof}
Let us first show that $p_1\otimes p_2$ is a correlated equilibrium of $(A_1\otimes A_2, B_1\otimes B_2)$. Given any strategy $x\circ y$ of Player 1, where $x$ and $y$ are two strategies of Player 1 in games $(A_1, B_1)$ and $(A_2, B_2)$, respectively, the conditional distribution of Player 2's strategy is $p_1|_x \times p_2|_y$, where $p_1|_x$ is the distribution of Player 2's strategy in game $(A_1, B_1)$ conditioned on Player 1 getting $x$ (in a sample from $p_1$), and similarly for $p_2|_y$. Note that $p_1|_x \times p_2|_y$ is a product distribution, therefore, Player 1 changing the strategy to any other $x' \circ y'$ does not increase her expected payoff, since the expectation decomposes as the product of two expectations in the two small games, both cannot be increased by changing strategies by the definition of correlation equilibrium.

Now we calculate the payoffs. The average payoff of Player 1 in $(A_1\otimes A_2, B_1\otimes B_2)$ for strategy $p_1\otimes p_2$ is 
\begin{align}
	\langle p_1\otimes p_2, A_1\otimes A_2 \rangle = \langle p_1, A_1 \rangle \cdot \langle p_2, A_2 \rangle = u_1 u_2
\end{align}
If the maximum quantum multiplicative incentive on $\ket{\psi(p_i)}$ are achieved by Player 1 applying $\Phi_{i}$, then the maximum quantum multiplicative incentive on $\ket{\psi(p_1\otimes p_2)}$ is at least $m_1m_2$, since Player 1 can at least apply the local operation $\Phi_1\otimes \Phi_2$. The additive incentive on $\ket{\psi(p_1\otimes p_2)}$ follows similarly.
\end{proof}
 
We want to use the lemma $d = \lfloor \log_c(n)\rfloor$ times, recursively, to construct a game of size $n$ from small building-block games of size $c$. First consider $c = 2$ and $n$ being a power of 2; same asymptotic bound holds for general $n$ (by looking at the largest submatrix of size $2^d$). Some experiences from the last section, such as $A = I$ and $p_{21} = 0$, help to design the $2\times 2$ game. But note that simply taking the optimal solution of the $2\times 2$ games will not work since eventually the quantum additive incentive will be $(u_1(\Phi_1(\rho)))^d - (u_1(\rho))^d = (\sqrt{2}/4+1/2)^d - (1-\sqrt{2}/4)^d = o(1)$. To have the additive incentive $(u_1(\Phi_1(\rho)))^d - (u_1(\rho))^d$ large, it needs $u_1(\Phi_1(\rho))$ to be very close to 1. It turns out that for a small-size game $(I_2,J_2)$, if $u_1(\Phi_1(\rho))$ is close to 1, so is $u_1(\rho)$. Thus the incentive in the size-$c$ game is actually very small, far from being a good solution of the small game. 

With this in mind, we construct the game in the following way. 
Again define utility functions of Player 1 and 2 
\begin{equation}
		A_1 = \begin{bmatrix} 1 & 0 \\ 0 & 1 \end{bmatrix}, \quad B_1 = \begin{bmatrix} 1 & 1 \\ 1 & 1 \end{bmatrix},
\end{equation}
and a probability distribution 
\begin{equation}\label{eq: best addi P for n=2}
 P = \begin{bmatrix} \sin^2(\epsilon) & \cos^2(\epsilon)\sin^2(\epsilon) \\ 0 & \cos^4(\epsilon)	\end{bmatrix},
\end{equation} 
where $\epsilon$ is a small number to be decided later. It is not hard to verify that $p$ is a CE with Player 1's average utility being 
\begin{equation}
	\mu_{1,old} = tr(P) = \sin^2(\epsilon) + \cos^4(\epsilon).
\end{equation}

The induced quantum superposition state
\begin{equation}
	\ket{\psi(p)} = \sin(\epsilon)\ket{00} + \cos(\epsilon)\sin(\epsilon)\ket{01} + \cos^2(\epsilon)\ket{11},
\end{equation}
is not a QCE, because Player 1 can apply the unitary operator
\begin{equation}
	U_1 = \begin{bmatrix} \cos(\epsilon) & -\sin(\epsilon) \\ \sin(\epsilon) & \cos(\epsilon) \end{bmatrix},
\end{equation}
which has a general effect of
\begin{equation}
	\begin{matrix}
		\ \ \ \cos(a)\cos(b)\ket{00} + \sin(a)\cos(c)\ket{01} \\
		+ \cos(a)\sin(b)\ket{10} + \sin(a)\sin(c)\ket{11}
	\end{matrix}
	\quad \rightarrow \ 
	\begin{matrix}
	\ \ \ \cos(a)\cos(b+\epsilon)\ket{00}  + \sin(a)\cos(c+\epsilon)\ket{01} \\
	+ \cos(a)\sin(b+\epsilon)\ket{10} + \sin(a)\sin(c+\epsilon)\ket{11}
	\end{matrix}.
\end{equation}
So applying $U_1$ on $\ket{\psi(p)}$ gives  
\begin{equation}
	U_1\ket{\psi(p)} = \sin(\epsilon)\cos(\epsilon)\ket{00} + \sin^2(\epsilon)\ket{10} + \cos(\epsilon)\ket{11},
\end{equation}
which has a utility of 
\begin{equation}
	\mu_{1,new} = \sin^2(\epsilon)\cos^2(\epsilon) + \cos^2(\epsilon).
\end{equation}

Now we apply the above lemma to define a large game by $A_d = A^{\otimes d}, \quad B_d = B^{\otimes d}$ and a correlated equilibrium by $P_d = P^{\otimes d}$. Recall that $d = \lfloor \log_2 n\rfloor$. Let $\epsilon$ satisfy $d = 4\epsilon^{-2}\ln(1/\epsilon)$, which gives $\epsilon = \Theta(\sqrt{\log d/d})$. Using the above tensor product construction, we get a quantum additive incentive of
\begin{align}
	\mu_{1,new}^d - \mu_{1,old}^d & = (\sin^2(\epsilon)\cos^2(\epsilon) + \cos^2(\epsilon))^d - (\sin^2(\epsilon) + \cos^4(\epsilon))^d \\
	& = (1-\sin^4(\epsilon))^d - (1- \sin^2(2\epsilon)/4)^d \\
	& \geq (1-d\epsilon^4) - (e^{-d\sin^2(2\epsilon)/4}) \\
	& = 1 - (4\epsilon^2\ln\frac{1}{\epsilon} + \epsilon^{\epsilon^{-2}\sin^2(2\epsilon)}) \\
	& \geq 1 - (4\epsilon^2\ln\frac{1}{\epsilon} + \epsilon^{4-16\epsilon^2/3}) \\
	& = 1-O\Big(\frac{\log^2d}{d}\Big) = 1-O\Big(\frac{\log^2\log n}{\log n}\Big)
\end{align}
where the first inequality used the bounds $\sin(x) < x$ and $1-dx < (1-x)^d < e^{-dx} $, for any $x>0$, and the second inequality used the bound $\sin(x) \geq x - x^3/6$ for $x>0$. 

\medskip
For the multiplicative incentive, we can simply take the $2\times 2$ game with the maximum multiplicative quantum incentive, 4/3, in the last section. The resulting multiplicative quantum incentive is then $(4/3)^{\log_2 n} = n^{\log_2(4/3)} = n^{0.4150\ldots}$. This falls short of the promise in Theorem \ref{thm:main1}. We now give another construction for general dimension $c$, which yields a better multiplicative incentive. Consider the following $c\times c$ bimatrix game. The utility function is still $A_1 = I_c$, $B_1 = J_c$, and define a distribution $p$ by 
\begin{equation}\label{eq: best multi P for n=2}
	p_{ij} = \begin{cases} 0 & i-j = 1 \text{ mod } c \\ \frac{1}{c^2-c} & \text{otherwise} \end{cases}
\end{equation}
where $i,j$ range over $\{0, 1, \ldots, c-1\}$. It is routine to check that it is indeed a correlated equilibrium, and Player 1's current utility is $c\cdot 1/(c^2-c) = 1/(c-1)$. Let the POVM $\{E_0, \ldots, E_{c-1}\}$ be 
\begin{equation}
	E_i = \ket{\psi_i} \bra{\psi_i}, \text{ where the } i' \text{-th entry of vector } \ket{\psi_i} \text{ is } \psi_{i,i'} = \begin{cases} \frac{2-c}{c} & i' - i = 1 \text{ mod } c \\ \frac{2}{c} & \text{otherwise} \end{cases}.
\end{equation}
It is a valid POVM measurement: 
\begin{equation}
	\sum_i E_i(j,j) = \left(\frac{2-c}{c}\right)^2 + (c-1)\left(\frac{2}{c}\right)^2 = 1, \quad \forall j,
\end{equation}
and 
\begin{equation}
	\sum_i E_i(j,j') = 2\cdot \frac{2-c}{c}\cdot \frac{2}{c} + (c-2)\left(\frac{2}{c}\right)^2 = 0, \quad \forall j\neq j'.
\end{equation}
Now the new probability is 
\begin{equation}
	p'_{ij} = \sum_{i1, i2} \sqrt{p_{i_1,j}p_{i_2,j}}E_i(i_1,i_2).
\end{equation}
and the new utility is 
\begin{align}
	& \ \sum_i \sum_{i1, i2} \sqrt{p_{i_1,i}p_{i_2,i}}E_i(i_1,i_2) \\
	= & \ \sum_i \sum_{i_1 \neq i+1, i_2 \neq i+1} \frac{1}{c^2-c} \left(\frac{2}{c}\right)^2\\
	= & \ c\cdot (c-1)^2 \cdot \frac{4}{c^3(c-1)} \\
	= & \ \frac{4(c-1)}{c^2}
\end{align}
So the multiplicative incentive is $4(c-1)^2/c^2$. By the same tensor product construction, we get an $n\times n$ game with multiplicative incentive \begin{equation}
	\Big(\frac{4(c-1)}{c^2}\Big)^{\log_c(n)} = n^{\frac{2+2\log_2(1-1/c)}{\log_2 c}}.
\end{equation}
Optimizing this over integers $c$, we get a quantum multiplicative incentive $n^{\log_2(3)-1} = n^{0.585...}$ at $c = 4$. 

\medskip
A final remark for this section is that both lower bounds, for the maximum additive and multiplicative incentives, can be achieved even by symmetric games. Indeed, it is not hard to verify that the probability distributions $P^{\otimes d}$ with $P$ given in Eq. \eqref{eq: best addi P for n=2} and Eq. \eqref{eq: best multi P for n=2} are still correlated equilibria for the game $(I,I)$, a natural extension of \emph{Battle of the Sexes} game in Section \ref{sec: intro}. 

\subsection{From classical to quantum: General mappings and their extremal properties}
Finally, for the general mapping, \ie an arbitrary quantum state $\rho$ with $p(s) = \rho_{ss}$ satisfied, the equilibrium property can be heavily destroyed, even if $p$ is uncorrelated. We can pin down the exact maximum quantum additive and multiplicative incentives.

\begin{Thm}[$p$ NE $\nRightarrow \rho$ QCE]\label{thm: examples}
	There exist $\rho$ and $p$ satisfying that $p(s)= \rho_{ss}$, $p$ is a Nash equilibrium, but $\rho$ is not even a quantum \emph{correlated} equilibrium. The maximum quantum additive incentive in a normalized $(m\times n)$-bimatrix game is $1-1/\min\{m,n\}$, and the maximum multiplicative incentive is $\min\{m,n\}$ even for correlated equilibria $p$.
\end{Thm}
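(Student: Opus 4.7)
The plan is to establish three things: (i) the upper bounds $1-1/\min(m,n)$ and $\min(m,n)$ on additive and multiplicative incentives for any deviating player, (ii) a single construction that simultaneously saturates both, and (iii) in that construction $p$ is even a Nash equilibrium (not merely correlated), giving the first sentence of the theorem.

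For the upper bound, consider Player $1$ deviating via some CPTP map $\Phi_1$; Player $2$ follows by symmetry. The crucial invariance is that $\Phi_1 \otimes I$ does not alter $\mathrm{Tr}_1(\rho)$, so Player $2$'s measurement-basis marginal $q(j) := \sum_i \rho_{(ij),(ij)}$ is preserved. With $b_j := \max_i a_{ij}$ and $\tilde p(ij) := [\Phi_1(\rho)]_{(ij),(ij)}$ (which satisfies $\sum_i \tilde p(ij) = q(j)$), one obtains
\[
u_1(\Phi_1(\rho)) \;=\; \sum_{i,j} a_{ij}\,\tilde p(ij) \;\leq\; \sum_j q(j)\, b_j.
\]
For the original payoff, the classical CE inequality yields $u_1(\rho) \geq \max_{i^*} \sum_j q(j)\, a_{i^*\!j}$, and I would lower-bound this by the very same quantity $\sum_j q(j) b_j$ in two complementary ways: averaging over the $m$ possible choices of $i^*$ gives at least $\frac{1}{m}\sum_j q(j) b_j$, while keeping only the single largest term over the (at most) $n$ supported columns gives at least $\max_j q(j) b_j \geq \frac{1}{n}\sum_j q(j) b_j$. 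Hence $u_1(\rho) \geq \frac{1}{\min(m,n)} \sum_j q(j) b_j$, from which the multiplicative bound $\min(m,n)$ and the additive bound $\bigl(1-\tfrac{1}{\min(m,n)}\bigr)\sum_j q(j) b_j \leq 1-\tfrac{1}{\min(m,n)}$ both follow at once.

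For the matching construction, assume WLOG $m\leq n$. Take Player $1$'s payoff to be $a_{ij}=\delta_{ij}$ on the leading $m\times m$ block of the $m\times n$ matrix (and zero elsewhere), Player $2$'s payoff the all-ones matrix, and $p(ij)=1/m^2$ on $[m]\times[m]$ (zero otherwise). Player $2$ is indifferent everywhere, and against the uniform marginal $p_2$ each of Player $1$'s pure strategies yields $1/m$, so $p$ is a Nash equilibrium. Now set $\ket{\psi} = \tfrac{1}{\sqrt m}\sum_{j=1}^m \ket{\phi_j}\ket{j}$, where $\{\ket{\phi_j}\}$ are the columns of any $m\times m$ complex Hadamard matrix (e.g.\ the discrete Fourier transform). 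Since $|\langle i|\phi_j\rangle|^2=1/m$, the density $\rho=\ket{\psi}\bra{\psi}$ satisfies $\rho_{(ij),(ij)} = p(ij)$. Player $1$ then applies the unitary $U=\sum_j \ket{j}\bra{\phi_j}$, producing $\tfrac{1}{\sqrt m}\sum_j \ket{jj}$ and payoff $1$, against the original $u_1(\rho)=1/m$. This simultaneously hits both extremal values and witnesses that $\rho$ fails to be a QCE.

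The main conceptual hurdle is extracting $\min(m,n)$ rather than $\max(m,n)$: counting Player $1$'s strategies alone gives only $m$, and the symmetric argument alone gives only $n$, so one must lower-bound $u_1(\rho)$ by \emph{both} $1/m$ and $1/n$ of the \emph{same} expression $\sum_j q(j)b_j$ that upper-bounds $u_1(\Phi_1(\rho))$. On the construction side, the complex-Hadamard condition is precisely what reconciles the uniform-diagonal constraint $\rho_{ss}=p(s)$ with perfect distinguishability of the conditionally-prepared states $\{\ket{\phi_j}\}$ on Player $1$'s side, allowing a single unitary to rotate every $\ket{\phi_j}$ onto $\ket{j}$ and push the payoff all the way to $1$.
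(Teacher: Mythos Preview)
Your proof is correct. The construction is essentially identical to the paper's: identity payoff for Player~1, all-ones for Player~2, uniform product distribution as the Nash equilibrium, and a Fourier/Hadamard unitary to unlock the maximally entangled state; the paper uses the Fourier transform specifically, you allow any complex Hadamard matrix, which is a harmless generalization.

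The upper-bound arguments diverge somewhat. Both proofs start from the fact that $\Phi_1$ preserves Player~2's marginal $q$ and hence $u_1(\Phi_1(\rho))\le\sum_j q(j)b_j$. For the factor $1/m$, both compare $u_1(p)$ to the uniform strategy over $S_1$; this is the same idea. For the factor $1/n$, you take a shorter route: from $u_1(p)\ge\max_{i^*}\sum_j q(j)a_{i^*j}$ you keep only the single column achieving $\max_j q(j)b_j$ and then use $\max\ge\text{average}$ over the $n$ columns. The paper instead constructs, for each column $s_{-i}$, the conditional distribution $p_i^{s_{-i}}(s_i)=\lambda(s)/p_{-i}(s_{-i})$, averages these into a single mixed deviation $\bar p_i$, and applies the CE condition to $\bar p_i$. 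Your argument is slicker in the two-player case; the paper's buys generality, since it is actually phrased and proved for arbitrary $k$-player games (with $\epsilon_i=\max\{|S_i|^{-1},|S_{-i}|^{-1}\}$) and then specializes to the bimatrix statement as a corollary.
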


The theorem is a corollary of the next more general one.
\begin{Thm}
	Suppose $p$ is a \emph{correlated} equilibrium for a normalized $n$-player game and $\rho$ satisfies $\rho_{ss} = p(s)$, $\forall s\in S$. Then the maximum quantum additive incentive is at most $1-\epsilon_i$ and the maximum quantum multiplicative incentive is at most $1/\epsilon_i$, where $\epsilon_i = \max\{|S_i|^{-1}, |S_{-i}|^{-1}\}$. Both bounds are achievable even by some \emph{Nash} equilibrium $p$.
\end{Thm}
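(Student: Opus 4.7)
The plan is to prove both upper bounds via a single averaging argument, and then exhibit a Nash equilibrium with an accompanying pure state $\ket{\psi}$ attaining them. Two facts drive the upper bound: since $u_i$ only sees computational-basis diagonals, $u_i(\rho) = u_i(p)$; and since $\Phi_i$ acts only on $H_i$, the $S_{-i}$-marginal of $\Phi_i(\rho)$ is still $p_{-i}$. The latter yields
\[
u_i(\Phi_i(\rho)) \;\le\; \sum_{s_{-i}} p_{-i}(s_{-i})\, u_i\bigl(f^*(s_{-i}), s_{-i}\bigr),
\]
where $f^*: S_{-i} \to S_i$ selects a best reply against each $s_{-i}$. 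Grouping the sum by the value $f^*(s_{-i})$, using $u_i \ge 0$ (normalization) to enlarge each inner sum to all $s_{-i}$, and applying the CE inequality $\sum_{s_{-i}} p_{-i}(s_{-i}) u_i(s_i', s_{-i}) \le u_i(p)$ once per $s_i' \in \mathrm{image}(f^*)$, we get $u_i(\Phi_i(\rho)) \le |\mathrm{image}(f^*)| \cdot u_i(p)$. Since $|\mathrm{image}(f^*)| \le \min\{|S_i|,|S_{-i}|\} = 1/\epsilon_i$, this gives the multiplicative bound, and the additive bound is immediate: $u_i(\Phi_i(\rho)) - u_i(\rho) \le u_i(\Phi_i(\rho))(1 - \epsilon_i) \le 1 - \epsilon_i$.

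For tightness, take the two-player game with $m := |S_i| \le |S_{-i}| =: n$, utility $u_i(s_i, s_{-i}) = [s_i \equiv s_{-i} \pmod m]$ for player $i$, and constant $0$ for the other player. The uniform product distribution $p$ is a Nash equilibrium: every pure action of player $i$ has expected payoff $1/m$ against uniform $p_{-i}$, and $u_i(p) = 1/m = \epsilon_i$. Define
\[
\ket{\psi} \;=\; \frac{1}{\sqrt{n}}\sum_{s_{-i}\in [n]} F_m\ket{s_{-i}\bmod m}\otimes \ket{s_{-i}}, \qquad \rho = \ket{\psi}\bra{\psi},
\]
where $F_m$ is the $m$-point Fourier transform. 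Mutual unbiasedness of $F_m$ with the computational basis gives $\rho_{ss} = 1/(mn) = p(s)$. Applying $\Phi_i = F_m^{-1}$ on $H_i$ turns $\ket{\psi}$ into $\frac{1}{\sqrt{n}}\sum_{s_{-i}} \ket{s_{-i}\bmod m,\, s_{-i}}$, whose measurement always yields $s_i \equiv s_{-i} \pmod m$, so $u_i(\Phi_i(\rho)) = 1$, attaining both bounds.

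The subtlety is in the image-size bound: the naive summation of the CE inequality over all $s_i' \in S_i$ yields only $u_i(\Phi_i(\rho)) \le |S_i| \cdot u_i(p)$ and misses the $|S_{-i}|$ side of the minimum. Restricting to the image of a best-reply selector $f^*$ is what delivers the tight $\min\{|S_i|,|S_{-i}|\}$. The construction is then essentially forced: requiring $\rho_{ss} = p(s)$ uniform while still allowing deterministic quantum recovery of the best reply naturally calls for a basis mutually unbiased with the computational basis, which the Fourier basis provides.
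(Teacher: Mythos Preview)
Your argument is correct, and for the upper bound it is genuinely different from the paper's. The paper treats the two sides of $\epsilon_i=\max\{|S_i|^{-1},|S_{-i}|^{-1}\}$ separately: for the $|S_{-i}|^{-1}$ side it builds, for each $s_{-i}$, the conditional distribution $p_i^{s_{-i}}(\cdot)=\lambda(\cdot\,s_{-i})/p_{-i}(s_{-i})$ of the post-operation outcome, averages these over $s_{-i}$, and compares the resulting product strategy to $p$ via the CE property; for the $|S_i|^{-1}$ side it compares instead to the uniform distribution on $S_i$ and uses $p_{-i}(s_{-i})\ge \lambda(s_is_{-i})$. Your single argument---bounding the post-operation payoff by the best-response value against $p_{-i}$, partitioning by the image of a best-reply selector $f^*:S_{-i}\to S_i$, and invoking the summed CE inequality once per image point---captures both sides at once through the inequality $|\mathrm{image}(f^*)|\le\min\{|S_i|,|S_{-i}|\}$. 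This is shorter and arguably cleaner; the paper's version, on the other hand, keeps track of the actual outcome distribution $\lambda$ rather than passing immediately to the pointwise best reply, which is closer in spirit to how the quantum operation is used.

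For tightness your construction is essentially the paper's Fourier construction (the paper uses a bijection $\pi:S_{-i}\to S_i$ when $|S_i|=|S_{-i}|$ and applies $F_{S_i}$ to $\frac{1}{\sqrt n}\sum_{s_{-i}}\ket{\pi(s_{-i})\,s_{-i}}$); your $s_{-i}\mapsto s_{-i}\bmod m$ generalizes this to rectangular shapes. One small caveat: your claim that every pure action of player $i$ has expected payoff $1/m$ against the uniform $p_{-i}$ holds only when $m\mid n$; otherwise the residue classes have unequal sizes and the uniform product is not a Nash equilibrium. Since achievability is existential this is harmless---just stipulate $m\mid n$ (or, as the paper does, take the square case $m=n$).
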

\begin{proof}
Suppose Player $i$ applies operation $\Psi_i$ on $\rho$, resulting in a distribution $\lambda$ on $S$ when the players measure the state in the computational basis. Sine local operation cannot change other parties' density operator, the marginal distribution of $\lambda$ on $S_{-i}$ is still $p_{-i}$. The new payoff for Player $i$ is $\|u\circ \lambda\|_1$, where $\|\cdot \|_1$ is the sum of entries in absolute value. We are going to prove that the original payoff for Player $i$ is at least $\epsilon_i$ fraction of the new payoff; that is,
\begin{equation}
	\|u\circ p\|_1 \geq \epsilon_i \|u\circ \lambda\|_1.
\end{equation}
This would imply the claimed bound for multiplicative incentive, and the additive incentive follows: $(1-\epsilon_i)\|u\circ \lambda\|_1 \leq 1-\epsilon_i$ since $u$ is normalized. 

Now we prove the above inequality. First consider the case of $\epsilon_i = |S_{-i}|^{-1}$. For each $s_{-i}\in S_{-i}$, define a probability distribution $p_i^{s_{-i}}$ over $S_i$ by  $p_i^{s_{-i}}(s_i) = \lambda(s_is_{-i})/p_{-i}(s_{-i})$. Then 
\begin{equation}
	\|u \circ (p_i^{s_{-i}}\times p_{-i})\|_1 = \sum_{s_i, s_{-i}'} \frac{\lambda(s)}{p_{-i}(s_{-i})}u(s_i, s_{-i}') p_{-i}(s_{-i}') \geq \sum_{s_i} \lambda(s_is_{-i})u(s_i, s_{-i})  
\end{equation}
where in the last step we dropped the summands for all $s_{-i}' \neq s_{-i}$. Now define 
\begin{equation}
	\bar p_i = \frac{1}{|S_{-i}|} \sum_{s_{-i}\in S_{-i}} p_i^{s_{-i}},
\end{equation}
the average the these distributions $p_i^{s_{-i}}$. Since $p$ is a correlated equilibrium, Player $i$ cannot increase her expected payoff by switching to $\bar p_i$. So 
\begin{align}
	\|u \circ p\|_1 & \geq \|u \circ (\bar p_i p_{-i})\|_1 = \frac{1}{|S_{-i}|} \sum_{s_{-i}} \|u \circ (p_i^{s_{-i}} p_{-i})\|_1 \\
	& \geq \frac{1}{|S_{-i}|} \sum_{s_i,s_{-i}} \lambda(s_is_{-i})u(s_i, s_{-i}) = \frac{1}{|S_{-i}|} \|u\circ \lambda\|_1. 
\end{align}
where the first equality is by noting that all matrices here are nonnegative.

For the case of $\epsilon_i = |S_{i}|^{-1}$, take the uniform distribution $q_i$ over $S_i$, then by the similar argument as above, we have 
\begin{align}
	\|u \circ p\|_1 \geq \|u \circ (q_i\times p_{-i})\|_1 = \frac{1}{|S_{i}|} \sum_{s_{i},s_{-i}} u(s_{i}s_{-i}) p_{-i}(s_{-i}).
\end{align}
Now note that $p_{-i}$ is the marginal distribution of $\lambda$ on $S_{-i}$, thus $p_{-i}(s_{-i}) \geq \lambda(s_is_{-i})$, and
\begin{align}
	\|u \circ p\|_1 \geq \frac{1}{|S_{i}|} \sum_{s_{i},s_{-i}} u(s_{i}s_{-i}) \lambda(s_{i}s_{-i}) = \frac{1}{|S_{i}|} \|u\circ \lambda\|_1. 
\end{align}
as desired.

We next show that the bounds in the above theorem is achievable even by a Nash equilibrium $p$. Assume that $|S_i| = |S_{-i}| = n$, then there is a one-one correspondence $\pi: S_{-i} \rightarrow S_i$. Consider the following $n$-player game: 
\begin{equation}
	u_i(s) = \begin{cases} 1 & \text{ if } s_i = \pi(s_{-i}) \\ 0 & \text{ otherwise }\end{cases}, \qquad u_j(s) = 1, \quad \forall j\neq i.
\end{equation}

Consider the state
\begin{equation}
	\ket{\psi} = (F_{S_i}\otimes I_{S_{-i}})\ket{\psi'}, \quad \text{ with } \quad \ket{\psi'} = \frac{1}{\sqrt{n}} \sum_{s_{-i} \in S_{-i}} \ket{\pi(s_{-i}) s_{-i}} 
\end{equation}
where $F_{S_i}$ is the Fourier transform operator on the register corresponding to $S_i$ (and $I_{S_{-i}}$ is the identity on the rest). 
If we measure $\ket{\psi}$, then we get a uniform distribution over the $n^2$ joint strategies. This is a Nash equilibrium, since if all other $[n]-\{i\}$ players choose a random strategy in $S_{-i}$, then Player $i$ is indifferent in all her $n$ strategies in $S_i$. 

However, $\ket{\psi}$ is not a quantum (even correlated) Nash equilibrium, because Player $i$ can apply the inverse Fourier transform on $\ket{\psi}$ to get $\ket{\psi'}$, which gives Player $i$ payoff 1 if the players measure the state. The gained payoff by this local operation is $1-1/n = 1-\epsilon_i$. 
\end{proof}

\section{Separation in classical and quantum correlation complexity of correlated equilibria}
This section studies the correlation from its generation. First observe that all correlations are correlated equilibria for some game. Actually, for any given probability distribution $p$ on $S$, for any $s_i$, let 
\begin{equation}
	s_{-i}^* =  \text{ the lexicographically first maximizer for } \max_{s_{-i}} p(s_is_{-i}).
\end{equation}
Define the utility function to be 
\begin{equation}
	u_i(s) = \begin{cases} 1 & \text{ if } s_{-i} = s_{-i}^* \\ 0 & \text{ otherwise }\end{cases}.
\end{equation}
Then it is easy to verify that $p$ is a correlated equilibrium. Thus the problem of generating correlated equilibria is as general as that of generating an arbitrary correlation.

Consider the following scenario for correlation generation. Two parties, $\alice$ and $\bob$, share some ``seed" correlation initially, and then perform local operations on their own systems. Different resources can serve as the seed correlation; in particular, it can be shared (classical) randomness and entangled (quantum) states. 

In the setting of games, two scenarios can be considered, depending on whether the local operations are carried out by trusted parties or untrusted players. We will discuss these models in the next two subsections.

\subsection{Correlated equilibrium generation: trusted local operation model}
\begin{figure}%
\begin{center}
\includegraphics[width=4in]{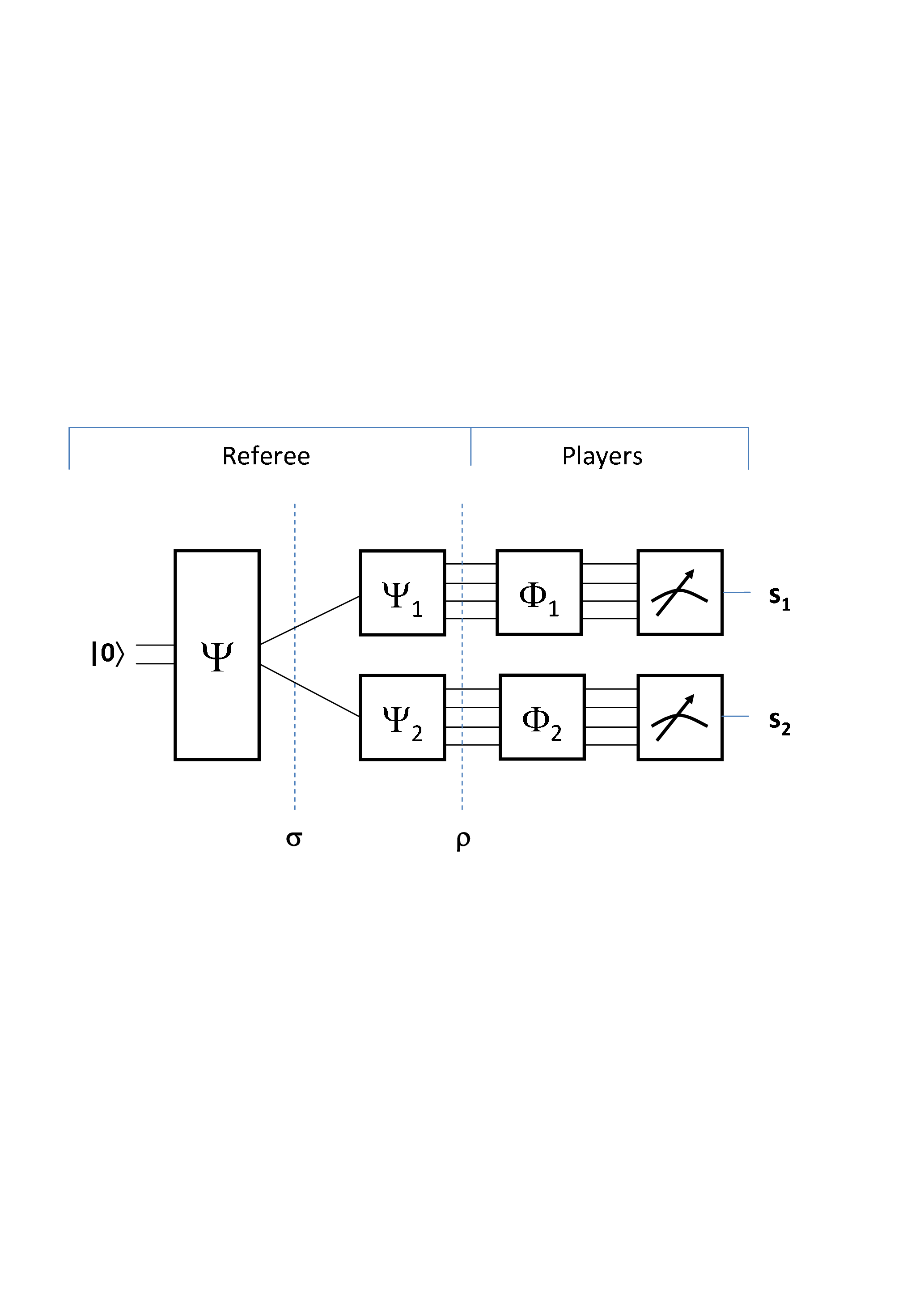}%
\end{center}
\caption{Correlated equilibrium generation with trusted local operations}%
\label{fig: trusted}%
\end{figure}

To illustrate the trusted local operation model, consider the a generalized \emph{Battle of the Sexes} game, where \alice and \bob are not in the same city but want to generate some correlation $p=(X,Y)$. There is a publicly trusted company C, which can help to generate $p$. Company C has a central server which generates a seed and send to its local servers A and B, distributed close to \alice and \bob, respectively. The local servers A and B apply the local operations to generate a state which is then sent to \alice and \bob. Here the local operations are carried out by the trusted servers A and B. And the complexity that we care is the size of the seed, which is also the communication between the central server to the two distributed servers A and B. 

More precisely, in the classical case, the two parties \alice and \bob initially have random variables $S_A$ and $S_B$, respectively, which may be correlated in an arbitrary way. They can also use private randomness $R_A$ and $R_B$, respectively. The two parties then apply local operations on their own systems. The joint output is then a pair of (correlated) random variables $(X,Y)$ where $X = f_A(S_A,R_A)$ and $Y = f_B(S_B,R_B)$ for some functions $f_A$ and $f_B$. In the quantum setting, the two parties initially share a state $\rho$, and they then apply local operations and output a pair of classical random variables $(X,Y)$. 
\begin{Def}
	The randomized correlation complexity of a distribution $p$ is the minimum size of shared random variables $(X',Y')$ given which \alice and \bob can apply local operations (but no communications) and output $X$ and $Y$, respectively, such that $(X,Y)$ is distributed according to $p$. The quantum correlation complexity is defined in the same way with the initially shared $(X',Y')$ being a quantum entangled state. We use $\rcorr(p)$ and $\qcorr(p)$ to denote the randomized and quantum correlation complexity of $p$. 
\end{Def}

We can also define the private-coin randomized (and quantum, respectively) communication complexity of distribution $p$, which is the minimum number of bits (and qubits, respectively) exchanged such that at the end of the protocol, \alice outputs $X$ and \bob outputs $Y$ with $(X,Y)$ distributed according to $p$. Note that no seed correlation is allowed in this case; that is why we call it private-coin. We use $\rcomm(p)$ and $\qcomm(p)$ to denote the private-coin randomized and quantum communication complexity of $p$. 

Some remarks are in order. First, recall that the size of the seed correlation $(X',Y')$ is half of the number of bits of $(X',Y')$, consistent with the convention that the size of public-coin string is the number of bits of $R$ which \alice and \bob share. Second, since (even one-way) communication can easily simulate the shared randomness/entanglement (by one party generating the shared resource and sending part of it to the other party), we have $\rcomm(p) \leq \rcorr(p)$ and $\qcomm(p) \leq \qcorr(p)$. It turns out that actually equality holds in both cases. However we still define the correlation complexity because it is a natural model and it is easier to bound (for example, in the later Theorem \ref{thm:qcorr}). Third, as we mentioned, \alice and \bob can always share the target correlation as the seed, so $\qcorr(p) \leq \rcorr(p) \leq \size(p)$. Finally, using a round-by-round argument, one can prove that $\qcomm(p) \geq I(p)/2$ where $I(p)$ is the mutual information $I(X,Y)$ for $(X,Y)\leftarrow p$. Putting all these together, we have
\begin{equation}
	\frac{I(p)}{2} \leq \qcomm(p) = \qcorr(p) \leq \rcomm(p) = \rcorr(p) \leq \size(p).
\end{equation}

\paragraph{Remark} The $\qcomm(p) = \qcorr(p)$ was pointed out firstly by Nayak (private communication), who observed that the argument in Kremer's thesis \cite{Kre95} (which was in turn attributed to Yao) implies that the Schmidt rank of a joint state generated by $c$-qubit communication (without prior entanglement) is at most $2^c$. 

\vspace{1em} We next relate the quantum and classical correlation complexities to standard and nonnegative ranks, respectively. 

\begin{Thm}\label{thm:qcorr}
\begin{equation}
	\frac{1}{4}\log_2 \rank(P) \leq \qcorr(p) \leq \min_{Q:\ Q\circ \bar Q = P} \log_2 \rank(Q),
\end{equation}
and the upper bound can be achieved by (local) unitary operations followed by a measurement in the computational basis. 
\end{Thm}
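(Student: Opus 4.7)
The plan is to prove the two inequalities separately: the upper bound constructively, by exhibiting an explicit protocol, and the lower bound via a rank argument on the correlation matrix induced by arbitrary local POVMs.

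For the upper bound, I would fix any complex matrix $Q$ satisfying $Q \circ \bar Q = P$ (i.e.\ $|Q_{ij}|^2 = P_{ij}$) with $\rank(Q) = r$, and form the bipartite pure state $\ket{\psi} = \sum_{i,j} Q_{ij} \ket{i}_A \ket{j}_B$. Its Schmidt rank equals the matrix rank of $Q$, namely $r$, so I would Schmidt-decompose $\ket{\psi} = \sum_{k=1}^r \lambda_k \ket{u_k}_A \ket{v_k}_B$ and take the shared seed to be $\ket{\phi} = \sum_{k=1}^r \lambda_k \ket{k}_A \ket{k}_B$ on $\mathbb{C}^r \otimes \mathbb{C}^r$, of size $\lceil\log_2 r\rceil$. \alice and \bob each apply a local unitary (padded to their output registers as needed) sending $\ket{k}$ to $\ket{u_k}$ and $\ket{v_k}$ respectively, which turns $\ket{\phi}$ into $\ket{\psi}$; a computational-basis measurement then returns $(i,j)$ with probability $|\qip{i,j}{\psi}|^2 = |Q_{ij}|^2 = P_{ij}$, realizing $p$. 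Minimizing over $Q$ then yields $\qcorr(p) \leq \log_2 \rank(Q)$, achieved using only local unitaries followed by a computational-basis measurement, as the theorem asserts.

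For the lower bound, I would start from an arbitrary protocol: \alice and \bob share a state $\rho$ on $\mathbb{C}^{D_1} \otimes \mathbb{C}^{D_2}$ of size $c = \qcorr(p)$, so $\lceil\log_2 D_1\rceil + \lceil\log_2 D_2\rceil = 2c$. By the Stinespring/Naimark reduction, each party's local CPTP map followed by a computational-basis measurement is realized by a POVM on their share of $\rho$: $\{M_i\}_{i\in[m]}$ on Alice and $\{N_j\}_{j\in[n]}$ on Bob, giving $P_{ij} = Tr[(M_i \otimes N_j)\rho]$. Fixing an operator basis $\{E_k\}_{k=1}^{D_1^2}$ of $\mathcal{B}(\mathbb{C}^{D_1})$ and expanding $M_i = \sum_k \alpha_{i,k} E_k$, I set $\beta_{k,j} = Tr[(E_k \otimes N_j)\rho]$, so $P_{ij} = \sum_k \alpha_{i,k}\beta_{k,j}$ factors $P$ as an $m \times D_1^2$ matrix times a $D_1^2 \times n$ matrix. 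Hence $\rank(P) \leq D_1^2 \leq 2^{2\lceil\log_2 D_1\rceil} \leq 2^{4c}$, giving $\log_2 \rank(P) \leq 4\qcorr(p)$.

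The main subtlety lies on the upper-bound side: the seed and local unitaries depend on the chosen $Q$, so the protocol is really a minimum over all amplitude lifts, and the Schmidt rank of the induced state $\ket{\psi}$ equals $\rank(Q)$ rather than $\rank(P)$ (the two quantities can differ substantially, which is precisely what lets quantum amplitudes help). On the lower-bound side the chief technical step is reducing an arbitrary local CPTP-then-measure action to a POVM on the original seed dimension $D_i$, so that no ancilla inflates the rank; this is the standard Naimark/Stinespring move. The constant $\tfrac{1}{4}$ comes from the loose bound $\rank(P) \leq D_1^2$, obtained by expanding only Alice's POVM; a sharper argument via the operator Schmidt decomposition of $\rho$ would give $\rank(P) \leq D_1 D_2 = 2^{2c}$ and the stronger factor $\tfrac{1}{2}$, but $\tfrac{1}{4}$ already suffices for the statement.
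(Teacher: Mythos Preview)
Your argument is correct. The upper bound is exactly the paper's: the paper takes an SVD $Q=\sum_i \sigma_i\ket{u_i}\bra{v_i}$, shares $\sum_i\sigma_i\ket{i}\ket{i}$, and has \alice apply $U$ and \bob apply $\bar V$, which is precisely your Schmidt-decomposition description of the state $\sum_{ij}Q_{ij}\ket{i}\ket{j}$. For the lower bound the paper takes a slightly different route: it writes the seed $\rho$ as a mixture of at most $2^{2r}$ pure states (with $r=\qcorr(p)$), Schmidt-decomposes each with at most $2^r$ terms, and expands $p(x,y)$ as a sum of $2^{2r}\cdot 2^r\cdot 2^r=2^{4r}$ rank-one contributions. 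Your operator-basis expansion of \alice's POVM is a cleaner alternative that avoids the pure-state/Schmidt detour and makes the improvement to $\rank(P)\le D_1D_2\le 2^{2c}$ (hence the constant $\tfrac12$) immediate via the operator Schmidt decomposition of $\rho$; the paper's decomposition, by contrast, naturally lands on $2^{4r}$ and the stated constant $\tfrac14$.
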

\begin{proof}
\emph{Lower bound:}	Suppose the seed state is $\rho = \mu_i \sum_{i=1}^{2^q}\ket{\psi_i}\bra{\psi_i}$, where $q = 2r = 2\qcorr(p)$ and $\ket{\psi_i}$'s are pure states. Further apply Schmidt decomposition on each $\ket{\psi_i}$: 
\begin{equation}
	\ket{\psi_i} = \sum_{j=1}^{2^r} \lambda_{ij} \ket{\psi_{ij}}\otimes \ket{\phi_{ij}},
\end{equation}
where $\ket{\psi_{ij}}$ and $\ket{\phi_{ij}}$ are in \alice's and \bob's sides, respectively. Now whatever local operations \alice and \bob apply (for generating a distribution $p$) can be formulated as general POVM measurements $\{E_x\}$ and $\{F_y\}$, respectively, resulting in
\begin{align}
	p(x,y) & = \sum_{i} \mu_i \big\langle E_x\otimes E_y, \ket{\psi_i}\bra{\psi_i} \big\rangle = \sum_{ijk} \mu_i \lambda_{ij}\lambda_{ik} \bra{\psi_{ik}} E_x \ket{\psi_{ij}} \cdot \bra{\phi_{ik}} E_y \ket{\phi_{ij}}
\end{align}
Therefore $P$ can be written as the summation of $2^{q+r+r} = 2^{4r}$ rank-1 matrices, \ie $\rank(P) \leq 2^{4r}$.

\vspace{.5em}
\emph{Upper bound:} Consider the singular value decomposition of $Q$: $Q = \sum_{i=1}^r\sigma_i \ket{u_i}\bra{v_i}$, where $\sigma_i>0$, $r = \rank(Q)$, $\ket{u_i}$ and $\ket{v_i}$ are unit length vectors. Observe that 
\begin{align}
	\sum_i \sigma_i^2 & = \|Q\|_F^2 = \sum_{x,y} |Q(x,y)|^2 = \sum_{x,y} Q(x,y) \bar Q(x,y) \\
	& = \sum_{x,y} 	(Q\circ \bar Q)(x,y) = \sum_{x,y} P(x,y) = 1. 
\end{align}
Now let \alice and \bob share the state $\ket{\psi} = \sum_{i=1}^r \sigma_i\ket{i}\otimes \ket{i}$, which is a valid pure state because of the equality above. Then \alice applies $U$ and \bob applies $V$, where $U$ and $V$ are unitary matrices the $i$-th columns of which are $\ket{u_i}$ and $\ket{\bar v_i}$, respectively. Then a measurement in the computational basis gives $(x,y)$ with probability 
\begin{equation}
	\Big|\sum_i \sigma_i \qip{x}{u_i} \qip{y}{\bar v_i}\Big|^2 = \Big|\sum_i \sigma_i \qip{x}{u_i} \qip{v_i}{y}\Big|^2 = |Q(x,y)|^2 = Q(x,y) \bar Q(x,y) = P(x,y), 
\end{equation}
as desired. 
\end{proof}

An application of the lower bound is to separate the quantum correlation complexity and mutual information, namely, the aforementioned lower bound $I(p)/2 \leq \qcorr(p)$ can be quite loose. 
\begin{Prop}
	There is a correlated distribution $p$ with $I(p) = O(n^{-1/3})$ and $\qcorr(p) \geq \frac{1}{4}\log_2(n+1)$.
\end{Prop}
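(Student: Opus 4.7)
The plan is to take $p$ to be a small perturbation of the uniform distribution on $[N]\times [N]$ with $N=n+1$. Specifically, let $P = (1-\epsilon)U + \epsilon D$, where $U$ is the uniform distribution (every entry $1/N^2$) and $D$ is the ``diagonal'' distribution defined by $D(x,x)=1/N$ and $D(x,y)=0$ for $x\neq y$. As a matrix, $P = \frac{\epsilon}{N}I_N + \frac{1-\epsilon}{N^2}J_N$, whose eigenvalues are $1/N$ (eigenvector $\mathbf{1}$) and $\epsilon/N$ (with multiplicity $N-1$). So for any $\epsilon>0$ the matrix $P$ has full rank $N=n+1$, and Theorem~\ref{thm:qcorr} immediately gives $\qcorr(p) \geq \tfrac{1}{4}\log_2 \rank(P) = \tfrac{1}{4}\log_2(n+1)$.

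For the mutual information, both marginals of $p$ are uniform on $[N]$, so $P_X(x)P_Y(y)=1/N^2$ for all $x,y$, and separating the $N(N-1)$ off-diagonal entries (where $N^2 P(x,y)=1-\epsilon$) from the $N$ diagonal entries (where $N^2 P(x,x)=1+\epsilon(N-1)$) yields
\begin{equation}
	I(p) \;=\; \frac{(1-\epsilon)(N-1)}{N}\log(1-\epsilon) \;+\; \frac{1+\epsilon(N-1)}{N}\log\!\big(1+\epsilon(N-1)\big).
\end{equation}
Setting $\eta=\epsilon(N-1)$ and Taylor-expanding both logarithms through second order, the $-\epsilon(N-1)/N$ from the first term cancels the $\eta/N$ from the second term, and what remains is $I(p) = \epsilon^2(N-1)/2 + O(\epsilon^3 N^2)$. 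Taking $\epsilon=1/N$ then gives $I(p) = O(1/N) = O(1/n)$, which in particular is $O(n^{-1/3})$, completing the proof.

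The work is almost entirely mechanical; the only subtlety is making sure the Taylor expansion error is controlled when $\eta$ is not tiny. This can be bypassed by the elementary bounds $|\log(1-\epsilon)|\leq 2\epsilon$ for $\epsilon\leq 1/2$ and $0\leq \log(1+\eta)\leq \eta$ for $\eta\geq 0$, which together yield $I(p)\leq 4\epsilon + 2\epsilon^2 N$ and hence $I(p)=O(1/n)$ at $\epsilon=1/N$. No major obstacle is expected; the construction works because uniform $U$ has rank $1$ and zero mutual information, while the diagonal correction $\epsilon D$ is just large enough to lift the rank of $P$ to $N$ at a quadratically small cost in $I(p)$.
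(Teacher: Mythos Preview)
Your proof is correct and in fact yields a stronger bound, $I(p)=O(1/n)$, than the $O(n^{-1/3})$ stated. Your approach is genuinely different from the paper's.

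The paper does not construct a distribution from scratch; it reuses a distribution from \cite{HJMR09} on $\{0,1\}^n\times\{0,1\}^n$, namely $p(x,y)=|\{i:x_i=y_i\}|\cdot 2^{1-2n}/n$, for which the bound $I(p)=O(n^{-1/3})$ is quoted from that reference. The paper's own contribution is to exhibit an explicit $(n+1)\times(n+1)$ submatrix (indexed by the strings $0^n,10^{n-1},\ldots,1^n$) and show it has full rank by a row-reduction argument, hence $\rank(P)\geq n+1$ and $\qcorr(p)\geq\frac14\log_2(n+1)$. The virtue of the paper's choice is contextual: this distribution was originally introduced to separate $I(p)$ from two other correlation measures, so the proposition shows it \emph{also} separates $I(p)$ from $\qcorr(p)$.

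Your construction $P=(1-\epsilon)U+\epsilon D$ on $[n+1]\times[n+1]$ is more elementary and fully self-contained: the rank is immediate from the eigenvalue decomposition, and the mutual information is computed directly rather than cited. It also yields $I(p)=O(1/n)$, strictly better than $O(n^{-1/3})$. One trade-off worth noting: in the paper's example $\size(p)=n$, so the gap between $I(p)$ and $\qcorr(p)$ sits inside a much larger ambient size, whereas in your example $\size(p)=\lceil\log_2(n+1)\rceil$ and $\qcorr(p)$ is already within a constant factor of that upper bound. Both establish the proposition as stated; yours is cleaner for this purpose, while the paper's ties into a known benchmark distribution.

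A minor remark on your write-up: when $\epsilon=1/N$ you have $\eta=(N-1)/N\approx 1$, so the Taylor expansion of $\log(1+\eta)$ is not a convergent argument on its own. You correctly flag this and give the elementary bypass; I would lead with the bypass and drop the Taylor paragraph, since the crude bounds $I(p)\leq B\leq (1+\eta)\eta/N\leq 2/N$ (using $A\leq 0$ and $\log(1+\eta)\leq\eta$) already finish the job in one line.
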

\begin{proof}
In \cite{HJMR09}, the following distribution is defined to separate mutual information and another two measures $C(p)$ and $T(p)$ which we will not give details but only mention that both are lower bounds for $\rcomm(p)$. The distribution $p$ is defined on $\B^n\times\B^n$:
\begin{equation}
	p(x,y) = \frac{|\{i:x_i = y_i\}|}{n}\cdot 2^{1-2n}
\end{equation}
and they showed that $I(p) = O(n^{-1/3})$. Here we can separate $I(p)$ and $\qcorr(p)$ by showing that $\rank(P) = n+1$ and thus $\qcorr(p) \geq \frac{1}{4}\log_2 (n+1)$. Indeed, consider the submatrix of size $(n+1)\times (n+1)$ where the indices $x,y\in \{0^n, 10^{n-1}, 110^{n-2}, \cdots, 1^n\}$. The submatrix, after a proper scaling, is the following one 
\begin{equation}
\begin{bmatrix} 1 & 1-1/n & 1-2/n & \cdots & 0 \\ 1-1/n & 1 & 1-1/n & \cdots & 1/n \\ 1-2/n & 1-1/n & 1 & \cdots & 2/n \\ \vdots & \vdots & \vdots & \ddots & \vdots \\ 0 & 1/n & 2/n & \cdots & 1\end{bmatrix}.
\end{equation}
By subtracting each row from its next one, it is not hard to see that the rank of this is $n+1$.
\end{proof}


\vspace{1em}
Next we fully characterize the randomized correlation and communication complexity by nonnegative rank. The argument of the lower bound was essentially known before (for example, in proving the Cut-and-Paste lemma in \cite{BYJKS04}), here we observe that the argument also yields a lower bound for nonnegative rank. We include it for the completeness. 
\begin{Thm}\label{thm:rcomm}
	$\rcomm(p) = \rcorr(p) = \lceil \log_2 \rank_+(P) \rceil$.
\end{Thm}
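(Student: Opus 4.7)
My plan is to establish the chain of inequalities
\[
  \lceil \log_2 \rank_+(P) \rceil \;\leq\; \rcomm(p) \;\leq\; \rcorr(p) \;\leq\; \lceil \log_2 \rank_+(P) \rceil,
\]
where the middle inequality is already recorded in the excerpt (shared seed can be simulated by one-way communication). So it suffices to prove the two outer inequalities.

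For the upper bound $\rcorr(p) \leq \lceil \log_2 \rank_+(P) \rceil$, I would invoke a nonnegative rank decomposition $P = \sum_{k=1}^{r} u_k v_k^T$ with $r = \rank_+(P)$ and nonnegative vectors $u_k, v_k$. Setting $\alpha_k = \|u_k\|_1 \|v_k\|_1$ and $\tilde u_k = u_k/\|u_k\|_1$, $\tilde v_k = v_k/\|v_k\|_1$, each $\tilde u_k, \tilde v_k$ is a probability distribution and, since $P$ is itself a probability distribution, $\sum_k \alpha_k = 1$. The protocol is then immediate: \alice and \bob share an index $K\in[r]$ distributed according to $\{\alpha_k\}$ as their seed (of size $\lceil \log_2 r\rceil$ bits), and independently sample $X\sim \tilde u_K$ and $Y\sim \tilde v_K$ using private randomness. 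The joint output has distribution exactly $P$.

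For the lower bound $\rcomm(p) \geq \lceil \log_2 \rank_+(P) \rceil$, I would analyze the transcript of any private-coin communication protocol of cost $c$ that generates $p$. The key structural fact is that in a protocol with no inputs, once the transcript $T=t$ is fixed, \alice's output $X$ is a function of $t$ and her private coins only, and similarly for \bob's output $Y$, so $X \perp Y \mid T$. Writing $q_A^t(x) = \Pr[X=x\mid T=t]$ and $q_B^t(y) = \Pr[Y=y\mid T=t]$, we obtain
\[
  P(x,y) \;=\; \sum_t \Pr[T=t]\cdot q_A^t(x)\, q_B^t(y),
\]
which is a decomposition of $P$ into at most $2^c$ rank-$1$ nonnegative matrices, hence $\rank_+(P)\leq 2^c$. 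Taking logs and ceilings gives the bound. Combining the three inequalities yields the claimed equality.

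I do not expect a serious obstacle here: the upper bound is a standard ``randomness-is-a-convex-combination'' argument, and the lower bound uses the conditional-independence-given-transcript property that also underlies the Cut-and-Paste lemma. The only detail worth noting is that the notion of ``size'' in the excerpt is half of the total bit length of the shared pair $(X',Y')$, which matches $\lceil \log_2 r\rceil$ when both parties receive a copy of the same index $K\in[r]$, so the size bookkeeping in the upper bound is consistent.
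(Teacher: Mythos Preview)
Your proposal is correct and follows essentially the same strategy as the paper: prove $\rcorr(p)\le\lceil\log_2\rank_+(P)\rceil$ via a nonnegative factorization normalized to distributions, and $\rcomm(p)\ge\lceil\log_2\rank_+(P)\rceil$ by decomposing $P$ over transcripts, then sandwich with $\rcomm(p)\le\rcorr(p)$. The only cosmetic difference is that for the lower bound the paper expands $\Pr[M=m]$ round by round (absorbing the odd-round factors into \alice's term and the even-round factors into \bob's term) before observing rank~$1$, whereas you invoke directly that $X\perp Y\mid T$; both yield the same nonnegative rank-$1$ summands indexed by the at most $2^c$ transcripts.
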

\begin{proof}
	We shall prove that $\rcorr(p) \leq \lceil \log_2 \rank_+(P) \rceil$ and $\rcomm(p)\geq \lceil \log_2 \rank_+(P) \rceil$. The conclusion then follows by the bound $\rcomm(p) \leq \rcorr(p)$.
	
	\emph{Upper bound for $\rcorr(p)$}: By definition of $\rank_+(P)$, we can decompose $P$ \st $P(x,y) = K \sum_{i=1}^r q_i a_i(x)b_i(y)$ where $q$, $a_i$'s and $b_i$'s are all probability distributions, $K>0$ is a global normalization factor, and $r = \rank_+(P)$. By summing over all $(x,y)$ and compare the above equality, it is easily seen that actually $K=1$. Therefore, \alice and \bob can sample from $P$ by first sharing a random $i$ distributed according to $q$, and \alice sampling $x$ from $a_i$, \bob sampling $y$ from $b_i$.

\vspace{.5em}
	\emph{Lower bound for $\rcomm(p)$}: Suppose $p$ can be generated by an $r$-round protocol $M = (M_1, \ldots, M_r)$ where the random variable $M_i$ is the message in the $i$-th message. \alice uses private randomness $r_A$ and \bob uses private randomness $r_B$. Without loss of generality, suppose \alice starts the protocol by sending $M_1$. Let $c$ be the total number of bits exchanged. At the end of the protocol \alice outputs $X$ and \bob outputs $Y$. Let $m$ range over the set of possible message. Then
\begin{align}
	p(x,y) = \sum_m \pr_{r_A, r_B}[M = m] \pr_{r_A, r_B}[X = x , Y = y | M = m]
\end{align}
Expand the probability $\pr_{r_A, r_B}[M = m]$ by conditional probabilities in a round-by-round manner, we have	
\begin{align}
	\nonumber \pr_{r_A, r_B}[M = m] & = \pr_{r_A}[M_1 = m_1] \cdot \pr_{r_B}[M_2 = m_2|M_1 = m_1] \cdot \ldots \\
	& \qquad \cdot \pr[M_r = m_r | M_1\ldots M_{r-1} = m_1\ldots m_{r-1}]
\end{align}
where the last probability is over either $r_A$ or $r_B$, depending on the parity of $r$. Finally noting that $\pr_{r_A, r_B}[X = x , Y = y | M = m] = \pr_{r_A}[X = x | M = m] \cdot \pr_{r_B}[Y = y | M = m]$ since conditioned on a fixed message $m$, the \alice and \bob's outputs are independent. Rearranging the terms gives	
\begin{align}
	p(x,y) & = \sum_m \Big(\pr_{r_A}[X = x | M = m] \cdot \prod_{i\in [r]: odd} \pr_{r_A}[M_i = m_i | M_{i-1} = m_{i-1}]\Big)  \\
	& \qquad \quad \cdot \Big(\pr_{r_B}[Y = y | M = m] \cdot \prod_{i\in [r]: even} \pr_{r_B}[M_i = m_i | M_{i-1} = m_{i-1}]\Big) 
\end{align}
Now for each fixed $m$, the first term in the above product depends only on $x$, and the second term depends only on $y$, thus each summand is a rank-1 matrix. Since each entry of the matrix is a product of probabilities, it is also a nonnegative matrix. Thus we have decomposed $P = [p(x,y)]_{x,y}$ into the summation of $2^c$ nonnegative rank-1 matrices, proving the theorem. 
\end{proof}

With the above setup, now we look for matrices $Q$ with small rank and large nonnegative rank for $Q\circ \bar Q$. Consider the following \emph{Euclidean Distance Matrix}: For distinct real numbers $c_1, c_2, \ldots, c_N$ of $\mbR^+$, consider the matrix $Q$ defined by 
\begin{equation}
	Q(x,y) = c_x - c_y
\end{equation} 
for all $x, y \in [N]$. Now we construct our probability distribution matrix $P = [p(x,y)]_{xy}$ by taking the Hadamard product of $Q$ and itself, then normalized: 
\begin{equation}
	P = Q \circ Q/\|Q \circ Q\|_1 = [(c_x-c_y)^2]_{xy}/\|Q \circ Q\|_1
\end{equation} 
Note that $Q$ is a real matrix, so $Q = \bar Q$. Clearly $\rank(Q) = 2$, therefore Theorem \ref{thm:qcorr} implies that $\qcorr(P) = 1$. The classical hardness is immediate from a recently proved result. 
\begin{Thm} [Beasley-Laffey, \cite{BL09}]\label{thm:rank+lb}
	$\rank_+(P) \geq \log_2 N$.
\end{Thm}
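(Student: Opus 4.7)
The plan is to exploit two features of $P$ that are independent of the Euclidean-distance structure: the diagonal of $P$ vanishes because $(c_x - c_x)^2 = 0$, while every off-diagonal entry is strictly positive because the $c_x$ are distinct. I expect these two features alone to force $\rank_+(P)\geq \log_2 N$; the normalization constant $\|Q\circ Q\|_1$ is immaterial since $\rank_+$ is invariant under positive scaling.

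First I would fix an optimal nonnegative factorization $P = \sum_{k=1}^r u_k v_k^T$ with $u_k, v_k \in \mbR_{\geq 0}^N$ and $r = \rank_+(P)$, and record the supports $A_k = \{x : u_k(x) > 0\}$ and $B_k = \{y : v_k(y) > 0\}$. The diagonal condition $\sum_k u_k(x) v_k(x) = P(x,x) = 0$ forces every term to vanish, so $A_k \cap B_k = \emptyset$ for each $k$. The off-diagonal condition $P(x,y) > 0$ for $x \neq y$ gives, for every ordered pair $x \neq y$, at least one index $k = k(x,y)$ with $x \in A_k$ and $y \in B_k$.

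The key step is to define the one-sided support map $\alpha : [N] \to 2^{[r]}$ by $\alpha(x) = \{k : x \in A_k\}$ and show it is injective. Suppose $x \neq y$ but $\alpha(x) = \alpha(y)$; then the index $k = k(x,y)$ lies in $\alpha(x)$, hence in $\alpha(y)$, so $y \in A_k$. Combined with $y \in B_k$ this contradicts $A_k \cap B_k = \emptyset$. Injectivity yields $N \leq |2^{[r]}| = 2^r$ and hence $r \geq \log_2 N$, as claimed.

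The argument is short and I do not foresee a serious obstacle; the only subtlety is to work with the one-sided map $\alpha$ rather than the pair $(\alpha(x),\beta(x))$ with $\beta(x)=\{k : x\in B_k\}$. The pair-valued map is also injective but only lands in the set of disjoint pairs of subsets of $[r]$, of size $3^r$, giving the weaker bound $r\geq \log_3 N$. The sharper base-$2$ bound comes precisely from noticing that injectivity already holds for $\alpha$ alone.
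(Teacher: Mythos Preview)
Your proof is correct. Note, however, that the paper does not actually prove this statement: it simply quotes the result from Beasley--Laffey \cite{BL09} without argument. So there is no ``paper's own proof'' to compare against beyond the citation.

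That said, your approach is the standard combinatorial one and is in the same spirit as Beasley--Laffey's. You use only the zero/nonzero pattern of $P$ --- vanishing diagonal, strictly positive off-diagonal --- and this already yields $\rank_+(M)\geq\log_2 N$ for \emph{any} nonnegative $N\times N$ matrix with that support pattern, not just Euclidean distance matrices. The key observation that the rank-one pieces $u_kv_k^T$ must have disjoint row and column supports (because the diagonal vanishes and all terms are nonnegative), together with the covering property forced by the positive off-diagonal, gives the injection $x\mapsto\{k:x\in A_k\}$ into $2^{[r]}$. Your remark about why the one-sided map $\alpha$ is needed for base $2$ rather than base $3$ is a nice clarification. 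The argument is complete as written.
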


By this theorem, we have the separation $\qcorr = 1$ and $\rcorr \geq \log_2(n)$. Letting $n$ go to infinity gives the separation in Theorem \ref{thm:corr separation}. 

Euclidean Distance Matrix is generally formed by taking distinct points $c_i$'s from a $d$-dimensional space, and it is a well-studied subject; see textbook \cite{Dat06} and survey \cite{KW10}. It is also conjectured in \cite{BL09} that actually $\rank_+(P) = N$ for all Euclidean Distance Matrices\footnote{A later paper \cite{LC10} claimed to prove this conjecture. Unfortunately, we think there is a gap in the proof, and after rounds of communications, the authors of \cite{LC10} admitted that the proof was wrong\cite{Chu10}.}. Note that existence of even one Euclidean Distance Matrix with $\rank_+(P) = N$ implies that our separation can be improved to ``1 \emph{vs.} $n$", the largest possible.

\medskip
A final remark is that one can also consider approximate versions of correlation complexity, the minimum seed needed to generate a probability distribution $p'$ which is close to the target $p$. Various distance functions can be considered. Theorem \ref{thm:rcomm} immediately characterizes this quantity as the approximate nonnegative rank, namely the minimum nonnegative rank of a matrix which is close to the given matrix (under the corresponding distance functions). The well-studied approximate nonnegative rank factorization usually uses the Frobenius distance \cite{BBLP07} or total variance ($\ell_1$-distance) \cite{ZFL08,YL10}. 
	
	Shi pointed out the paper \cite{ASTS+03}, the main result of which showed an exponential separation between randomized and quantum communication complexities of approximating a correlation in $\ell_1$-distance.  To be more precise, a natural correlation $p$ of size $n$, arising from the Disjointness function, has $\qcomm_\epsilon(p) = O(\log n\log(1/\epsilon))$, but $\rcomm_\epsilon(p) = \Omega(\sqrt{n})$.  
	
\subsubsection{Conjecture of high nonnegative rank for a random matrix with low \qcorr}\label{sec: conj}
We actually conjecture that a random $P$ with $\qcorr(P) = 1$ and some condition holding has $\rcorr(P) = n$ with probability 1. Let us make the precise statement. 
 
For a matrix $M$, denote by $\bra{m_i}$ the row $i$ and by $\ket{m_j}$ the column $j$. Note that multiplying a whole column by a positive number does not change the rank or the nonnegative rank of a matrix.
\begin{Def}
	A nonnegative matrix $M$ is \emph{in the normal form} if $\|\ket{m_j}\|_1 = 1$ for all $j\in [n]$. A nonnegative factorization $M_{m\times n} = C_{m\times r} D_{r\times n}$ is \emph{in the normal form} if $\|\ket{d_j}\|_1 = 1$ for all $j\in [n]$. A nonnegative factorization $M_{m\times n} = C_{m\times r} D_{r\times n}$ is called \emph{optimal} if $r = rank_+(M)$.
\end{Def}
\begin{Fact}
	Any nonnegative matrix in the normal form has an optimal nonnegative factorization in the normal form.
\end{Fact}
\begin{proof}
	Take an optimal nonnegative factorization $M_{m\times n} = C_{m\times r} D_{r\times n}$. Rewrite it as 
	\begin{equation}
	M = [\ket{c_1}/\|\ket{c_1}\|_1, \cdots, \ket{c_n}/\|\ket{c_n}\|_1] \cdot diag(\|\ket{c_1}\|_1, \cdots, \|\ket{c_n}\|_1)\cdot D.
	\end{equation}
	The middle diagonal matrix can be absorbed into $D$, giving a new matrix $D'$. Now the $\ell_1$ norm of column $j$ of $D'$ is 
	\begin{equation}
		\sum_i \|\ket{c_i}\|_1 d_{ij} = \sum_{ik}c_{ki}d_{ij} = \sum_k m_{kj} = \|\ket{m_j}\|_1 = 1.
	\end{equation}
\end{proof}

We want to define our $Q_n = A_{n\times r}B_{r\times n}$, where $(A,B)$ comes from the following set. 
\begin{equation}
	\mcM_n = \{(A,B): A \in \mbR^{n\times r}, B\in \mbR^{r\times n}, \sum_{i=1}^n \qip{a_i}{b_j} = 1, \forall j\in [n], \text{ and } \qip{a_i}{b_i} = 0, \forall i\in [n]\}.	
\end{equation}
Here the first equality is to make $Q=AB$ in the normal form. The second equality requires that the diagonal entries of $Q_n$ are zero; this is for the purpose of later induction. Let
\begin{equation}
	\mcQ_n = \{Q_n = AB: (A,B) \in \mcM_n\}.	
\end{equation}
One can pick a random $Q\in \mcQ_n$ as follows. First pick random vectors $\bra{a_i}$'s on the unit circle, and then a random $\ket{b_i}$ satisfying the two equalities in the definition of $\mcM_n$. Finally let $Q = AB$.
Our main conjecture is:
\begin{Conj}
	A random $Q \in \mcQ_n$ has $rank_+(Q\circ Q) = n$ with probability 1.
\end{Conj}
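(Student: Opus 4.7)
The plan is to prove the conjecture by showing that the ``bad'' set
\[
\mathcal{B}_n := \{Q \in \mcQ_n : \mathrm{rank}_+(Q \circ Q) \leq n-1\}
\]
is a semi-algebraic subset of $\mcQ_n$ of strictly smaller dimension, and hence has measure zero. I would organize the argument into a structural support analysis, a finite enumeration of patterns, and a per-pattern dimension bound.

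First I would carry out a support analysis of any hypothetical low-rank factorization. Suppose $Q \in \mathcal{B}_n$ and write $Q \circ Q = CD$ with $C \in \mbR^{n \times (n-1)}_{\geq 0}$ and $D \in \mbR^{(n-1) \times n}_{\geq 0}$, which we may take to be in normal form by the Fact just stated. The zero-diagonal condition $0 = (Q\circ Q)_{ii} = \sum_k c_{ik} d_{ki}$ combined with nonnegativity forces $c_{ik} d_{ki} = 0$ for all $k$. Defining $S_i := \{k : c_{ik} > 0\}$ and $T_j := \{k : d_{kj} > 0\}$ in $[n-1]$, this yields $S_i \cap T_i = \emptyset$. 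On the Zariski-dense open subset of $\mcQ_n$ where every off-diagonal entry of $Q$ is nonzero, one additionally has $S_i \cap T_j \neq \emptyset$ for $i \neq j$.

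Since there are only finitely many admissible patterns $(S_i, T_i)_{i=1}^n$, the set $\mathcal{B}_n$ decomposes into finitely many pattern classes, and it suffices to bound the dimension of each. For a fixed pattern, the set of $(A,B,C,D)$ satisfying $AB \circ AB = CD$ with the prescribed supports is a real algebraic variety, and I would argue that its projection onto $(A,B) \in \mcM_n$ lies in a proper subvariety of $\mcM_n$. The base case $n=3$ is immediate: admissible support pairs force each $i$ to have $|S_i| = |T_i| = 1$ with $S_i \cap T_i = \emptyset$, partitioning $[3]$ into two ``types'' in $\{1,2\}$; but then $(Q\circ Q)_{ij}=0$ whenever $i,j$ are in the same type, so each type has size at most $1$, contradicting $|U_1|+|U_2|=3$. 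For general $n$, I would attempt induction by peeling: if some $|S_i|=1$, say $S_i=\{k\}$, then row $i$ of $Q\circ Q$ equals $c_{ik}\cdot d_k$ (the $k$-th row of $D$ scaled), which pins down $d_k$ up to a positive scalar; subtracting the rank-$1$ term $\ket{c_k}d_k$ then leaves a smaller nonnegative factorization to which the inductive hypothesis would apply.

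The main obstacle is the remaining ``dense-support'' case, in which every $|S_i| \geq 2$. Here the purely combinatorial reading of the support data yields only a rectangle cover of $[n]^2 \setminus \mathrm{diag}$ by $n-1$ pairwise diagonal-avoiding rectangles, and it is classical that this off-diagonal pattern admits a cover of size $\Theta(\log n)$; so no combinatorial argument can possibly force $n$ rectangles. A successful proof must exploit the low-rank structure $Q = AB$ together with the normal-form and zero-diagonal constraints to extract polynomial identities in $(A,B)$ that are forced by the factorization but cannot hold on any positive-dimensional stratum of $\mcM_n$. This is precisely the algebraic content of the conjecture and is where a genuinely new idea appears to be required, given that even the Euclidean-distance special case is so far only known up to a $\log$ factor (Theorem \ref{thm:rank+lb}).
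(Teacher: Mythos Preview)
This statement is explicitly a \emph{conjecture} in the paper, not a theorem: it is introduced with ``Our main conjecture is:'' in Section~\ref{sec: conj}, and no proof is offered in the body of the paper. There is therefore no proof in the paper against which to compare your attempt.

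Your proposal is candid about this. You correctly identify that the combinatorial support analysis handles only the easy strata (some $|S_i|=1$, permitting a peeling reduction) and that the dense-support case --- where every row of $C$ has at least two nonzero entries --- cannot be excluded by pattern counting alone, since the off-diagonal zero pattern has rectangle-cover number only $\Theta(\log n)$. Your closing assessment that ``a genuinely new idea appears to be required'' is accurate; the conjecture is open, and even the Euclidean Distance Matrix special case (Theorem~\ref{thm:rank+lb}) is proven only up to $\log_2 n$.

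For context: the paper's source contains, after the terminal \texttt{\textbackslash end\{document\}} and hence excluded from the compiled article, an appendix entitled ``Failures in lower bounding nonnegative rank for almost all $P = Q\circ \bar Q$ with $\rank(Q) = 2$''. That appendix sketches a different line --- induction on $n$ combined with the transversality theorem from differential topology --- aiming to show that for generic $(A_n,B_n)\in\mcM_n^{\good}$, the set of extensions $(\bra{a_{n+1}},\ket{b_{n+1}})$ for which $Q_{n+1}\circ Q_{n+1}$ admits a rank-$n$ nonnegative factorization has measure zero. The argument hinges on a lemma that a generic $Q\circ Q$ has only finitely many normal-form nonnegative factorizations, and the proof of that lemma is left blank. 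So that route is also incomplete, and the author explicitly labels it a failure.

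In short: neither you nor the paper proves the statement; both correctly leave it open, and your diagnosis of where the difficulty lies is sound.
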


\subsection{Correlated equilibrium generation: untrusted local operation model}\label{sec: untrusted}
\begin{figure}%
\begin{center}
\includegraphics[width=3.3in]{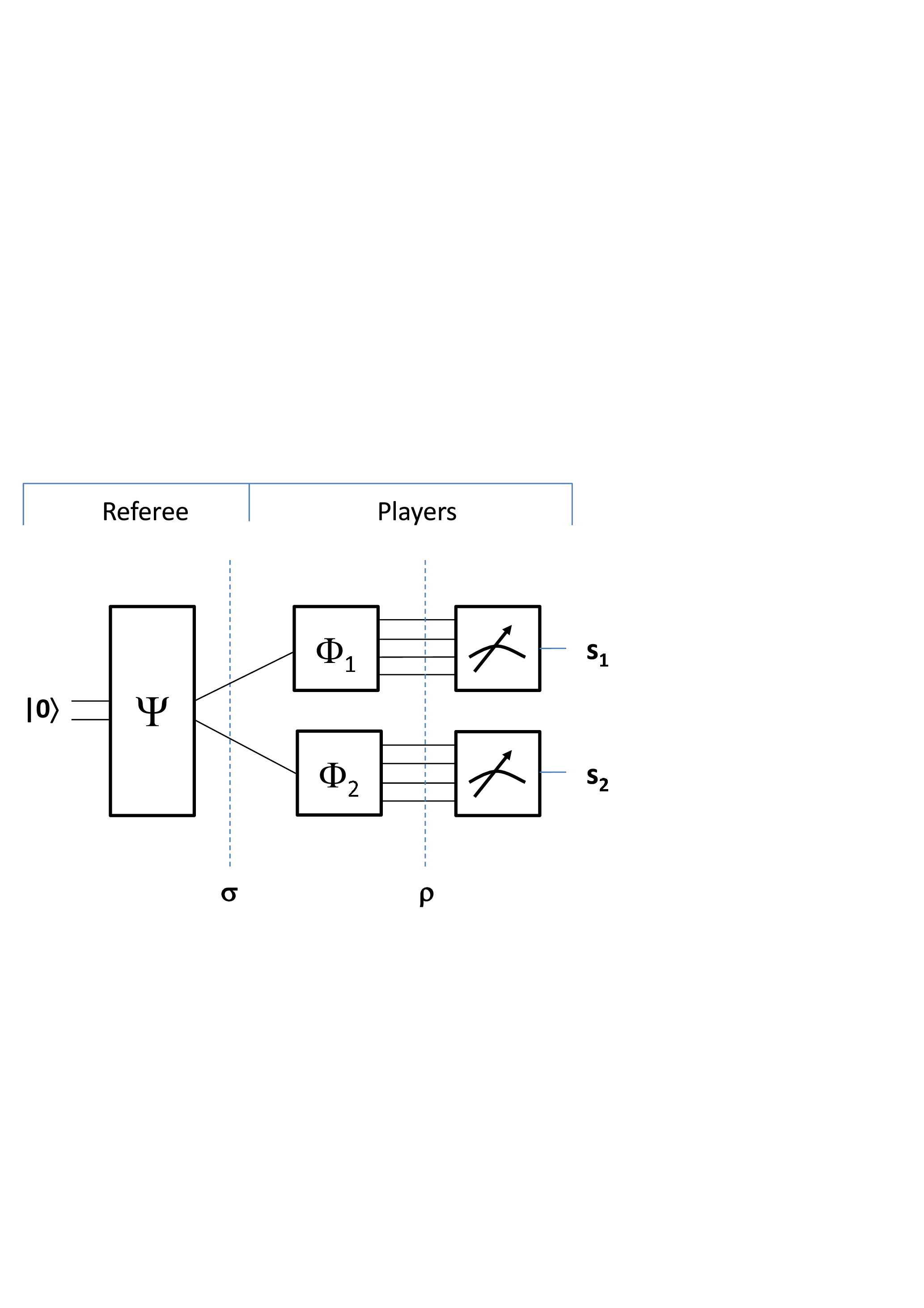}%
\end{center}
\caption{Correlated equilibrium generation with untrusted local operations}%
\label{fig: untrusted}%
\end{figure}

In the untrusted local operation model as illustrated in Figure \ref{fig: untrusted}, the referee generates the seed $\sigma$ and send it to the two players, who then are supposed to finish the correlation generation process by applying the local operations $\Psi_1$ and $\Psi_2$, respectively. However, since the players can deviate from the protocol, the generation process is an equilibrium if no player has incentive to deviate. We define the correlation complexity of generating an CE $p$ in game $G$ as the minimum size of the seed needed. 
\begin{Def}
	The randomized correlation complexity of a distribution $p$ is the minimum size of shared random variables $(X',Y')$ given which 
	\begin{enumerate}
		\item Player 1 and Player 2 can apply local operations $\Phi_1$ and $\Phi_2$ and output $X$ and $Y$, respectively, such that $(X,Y)$ is distributed according to $p$,
		\item no player has incentive to deviate from the protocol, namely, Player $i$ cannot increase her payoff by applying some $\Phi_i'$, provided that the other player does not deviate from the protocol.
	\end{enumerate} The quantum correlation complexity is defined in the same way with the initially shared  $(X',Y')$ being a quantum state. We use $\rcorr(p,G)$ and $\qcorr(p,G)$ to denote the randomized and quantum correlation complexity of CE $p$ in game $G$. 
\end{Def}

It is easy to see that $\rcorr(p,G) \geq \rcorr(p)$ since the definition of $\rcorr(p,G)$ has more requirement than that of $\rcorr(p)$. Similarly we have $\qcorr(p,G) \geq \qcorr(p)$. The following fact is also easy to see because unitary operations are reversible. 
\begin{Fact}
	$\qcorr(p,G) \leq \min_{\ket{\psi}}\qcorr(\ket{\psi})$, where the minimization is over the set \[\{\ket{\psi}: \ket{\psi} \text{ is a QCE of $G$, and $\ket{\psi}$ can be generated by local \emph{unitary} operations (on some seed)}\}.\]
\end{Fact}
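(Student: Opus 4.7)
The plan is to exhibit an explicit equilibrium protocol that uses, as its seed, a state of size $\qcorr(\ket\psi)$, and then to verify the equilibrium condition by exploiting the reversibility of unitaries. Concretely, fix any $\ket\psi$ in the minimizing set. By definition of that set, there is a shared quantum seed $\sigma$ of size exactly $\qcorr(\ket\psi)$ together with local unitaries $U_1, U_2$ such that $(U_1\otimes U_2)\sigma(U_1\otimes U_2)^\dagger = \ket\psi\bra\psi$; measuring this state in the computational basis produces the classical distribution $p$ associated with $\ket\psi$. The proposed protocol in the untrusted model is the obvious one: the referee distributes the two halves of $\sigma$ to \alice and \bob, who are instructed to apply $U_1$ and $U_2$ respectively and then measure in the computational basis.

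Assuming both players follow the protocol, the joint output is distributed exactly as $p$, so the first bullet of the definition of $\qcorr(p,G)$ is satisfied. The heart of the argument is the second bullet, the equilibrium condition. Suppose \alice deviates by replacing her instructed map $\Phi_1 = (\text{measure})\circ \mathcal U_1$ with some arbitrary $\Phi_1' \in \cptp(H_1)$ (followed by the computational basis measurement), while \bob follows the protocol. Define $\tilde\Phi_1 \defeq \Phi_1'\circ \mathcal U_1^{-1}$, where $\mathcal U_1^{-1}(\cdot) = U_1^\dagger(\cdot)U_1$ is the inverse unitary channel; this is a valid CPTP map on $H_1$ because the composition of CPTP maps is CPTP. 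Since $\Phi_1' = \tilde\Phi_1 \circ \mathcal U_1$, the global state just before measurement is
\begin{equation}
(\Phi_1'\otimes \mathcal U_2)(\sigma) = (\tilde\Phi_1 \otimes I)\bigl((\mathcal U_1\otimes \mathcal U_2)(\sigma)\bigr) = (\tilde\Phi_1\otimes I)(\ket\psi\bra\psi),
\end{equation}
so \alice's resulting payoff is exactly $u_1(\tilde\Phi_1(\ket\psi\bra\psi))$.

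Now I invoke the hypothesis that $\ket\psi$ is a QCE of $G$: for every local CPTP map on \alice's side, in particular for $\tilde\Phi_1$, the payoff satisfies $u_1(\tilde\Phi_1(\ket\psi\bra\psi)) \le u_1(\ket\psi\bra\psi)$, and the right-hand side is precisely what \alice obtains by following the protocol. The same argument with the roles of the players swapped shows \bob has no profitable deviation either. Hence the protocol is an equilibrium that generates $p$ using a seed of size $\qcorr(\ket\psi)$, proving $\qcorr(p,G) \le \qcorr(\ket\psi)$; minimizing over $\ket\psi$ in the permitted set gives the claim. The one place to be careful, and which also dictates the unitarity restriction in the hypothesis, is the identity $\Phi_1' = \tilde\Phi_1\circ \mathcal U_1$: this step breaks if $\mathcal U_1$ is replaced by a general (non-invertible) CPTP map, because then a deviation from the protocol at the seed $\sigma$ need not correspond to any CPTP deviation from $\ket\psi$, and the QCE property of $\ket\psi$ would no longer prohibit it. This reversibility issue is the only real content in the proof; otherwise the argument is bookkeeping.
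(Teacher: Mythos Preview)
Your proof is correct and is exactly the approach the paper intends: the paper's entire ``proof'' is the one-line remark that the fact ``is also easy to see because unitary operations are reversible,'' and your argument spells this out precisely, pivoting on $\tilde\Phi_1 = \Phi_1' \circ \mathcal U_1^{-1}$ to reduce any deviation at the seed to a local CPTP deviation at $\ket\psi$, which the QCE property then rules out. Your final paragraph also correctly identifies why the unitarity hypothesis is needed.
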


Next we show that the separation of classical and quantum correlation generation in the trusted model also applies in the untrusted model for some natural game. Consider the following load-balancing scenario, where each of the two players have $n$ servers to choose. If the two players choose the same server, then both will suffer from the delay due to the collision. If the two players choose different strategies, then they each use one server and no delay is caused, thus they are both happy. So the game matrix is $(J-I, J-I)$. This can be viewed as a natural generalization of the \emph{Traffic Light} game, which is also about collision avoiding. Since in the example in the trusted local operation model, the upper bound of the $\qcorr(p)$ is achieved by unitary operations, and it is easy to verify that the pure state before the measurement is a QCE of the game, the above Fact implies that the same separation also applies here: $\qcorr(p,G) = 1$ and $\rcorr(p,G) \geq \rcorr(p) \geq \log_2 n$. 

A final remark is that though it holds that $\rcorr(p,G) \leq \size(p)$ for all CE $p$, there is no such upper bound for $\rcomm(p,G)$. Actually, for the aforementioned CE $p$ in the Battle of the Sexes game (with half probability on $(A,A)$ and half probability on $(B,B)$), if there is no Referee, then a communication protocol to achieve $p$ actually gives a protocol for weak coin flipping with no bias. This is known to be classically impossible (if no computational assumption is made); in fact in any protocol, there is always one player with success probability being 1. Weak coin flipping with no bias is also impossible for quantum protocols \cite{Amb04}, but the bias can be made arbitrarily close to 0 \cite{Moc07}. This also implies an ``finite" vs. ``infinite" separation between classical and quantum \emph{approximate} communication complexities $\rcomm_\epsilon(p,G)$ and $\qcomm_\epsilon(p,G)$ .

\section{Concluding remarks and open problems}
This work gives a first-step explorations for quantization of classical strategic games, and calls for more systematic studies for quantum strategic game theory. There are lots of problems left open for future work. Some are closely related to quantum games; some are motivated from quantum games but are of their independent interest. 
\begin{enumerate}
	\item {\bf (Maximum increase of payoff)} How to improve the bounds in the first part of Theorem \ref{thm:main1}? Is the maximum quantum incentive in an $n\times n$ bimatrix game always achievable at $A = I_n$, $B = J_n$? Can we give upper bounds better than $1-1/n$ for additive incentive (or better than $n$ for multiplicative incentive) of $\ket{\psi_p}$? Can we solve more low dimensional cases, such as $n=3,4,5$? What is the complexity of finding the maximum quantum incentive for a given bimatrix game? 
	
	\item {\bf (Special games)} There are many important special classes of games, such as zero-sum games, succinctly representable games, etc. It would be interesting to investigate the extremal questions about the maximum incentives in these interesting classes. 
	
	\item {\bf (Average-case games)} How about the increase of payoff for an random game drawn from some natural distribution? 
	
	\item {\bf (Separation between classical and quantum correlation complexities)} Can we improve the separation between randomized and quantum correlation complexities? We conjecture that a random size-$n$ distribution $p$ with $\qcorr(p) = 1$ would have $\rcorr(p) = n$ with probability 1.

	\item {\bf (Approximate correlation/communication complexity)} Given the connection of approximate randomized correlation/communication complexity and approximate nonnegative rank, can we use the former to answer some questions in the later?
	
	\item {\bf (Characterizing \qcorr)}. We have shown that the randomized correlation and communication complexities are fully characterized by the well-studied measure of the nonnegative rank. Can we have a characterization of the quantum correlation complexity better than the bounds in Theorem \ref{thm:qcorr}? 
	
	\item {\bf (Direct sum/product of correlation and communication complexities)} Do we have direct sum/product for (approximate) correlation and communication complexities?
	
	\item {\bf (Communication complexities of generating CE)} What game $G$ has finite $\rcomm(p,G)$? Has $\rcomm(p,G) = poly(\size(p))$? How about quantum? What if we allow a small error? 
	
	This can be seen as an extension of coin-flipping (without computational assumptions) to the more general case. 
	
	\item {\bf (Testing of quantum equilibria)} How many identical copies of $\rho$ are needed to test the quantum equilibrium property? 
	\end{enumerate} 

\subsection*{Acknowledgments}
The author would like to thank Ming Lam Leung, Yang Li, Cheng Lu for interesting discussions at the early stage of the work, and Andrew Yao for listening to the progress of the work and giving very encouraging comments. An anonymous referee and Dr. Zhaohui Wei helped to improve the presentation of the paper by pointing out some typos. The referee also raised the issue of untrusted player in correlation generating, leading to discussions in Section \ref{sec: untrusted}. Thanks also to Aram Harrow and Andreas Winter for pointing out \cite{Wyn75,Win05}, to Ashwin Nayak and Yaoyun Shi for pointing out \cite{ASTS+03} and giving comment . 

This work was supported by China Basic Research Grant 2011CBA00300 (sub-project 2011CBA00301), and Hong Kong General Research Fund 419309 and 418710. Part of the work was done when the author visited Centre of Quantum Technologies and Tsinghua University, latter under the support of China Basic Research Grant 2007CB807900 (sub-project 2007CB807901).

\bibliography{QCE}
\bibliographystyle{plain}
\end{document}